\newtheorem{thm}{Theorem}[section]
\newtheorem{prop}[thm]{Proposition}
\newtheorem{lemma}[thm]{Lemma}
\newtheoremstyle{obs}
  {3pt}		
  {3pt}		
  {}		
  {}		
  {\bfseries}
  {.}		
  {.5em}	
  {}		
\theoremstyle{obs}
\newcommand{\R}{\mathbb{R}}      
\def\qed{\ifvmode\removelastskip\fi
{\unskip\nobreak\hfil\penalty50\hbox{}\nobreak\hfil \hbox{\vrule
height1.2ex width1.2ex}\parfillskip=0pt \finalhyphendemerits=0
\par \smallskip}}
\def\fpd#1#2{{\frac{\partial #1}{\partial #2}}}
\def\vectorfields#1{{\mathcal X}(#1)}
\def\cinfty#1{C^{\scriptscriptstyle\infty}(#1)}
\begin{document}

\title{The inverse problem of the calculus of variations and the stabilization of controlled Lagrangian systems}

\author{M.\ Farr\'e Puiggal\'i $\dagger$ and T.\ Mestdag $\ddagger$ \\[2mm]
{\small $\dagger$ Instituto de Ciencias Matem\'aticas (CSIC-UAM-UC3M-UCM)} \\ {\small  C/Nicol\'as
Cabrera 13-15, 28049 Madrid, Spain} \\ {\small marta.farre@icmat.es}
\\[2mm]
{\small $\ddagger$ Department of Mathematics and Computer Science, University of Antwerp,}\\
{\small Middelheimlaan 1, B--2200 Antwerpen, Belgium} \\
{\small and}
\\
{\small Department of Mathematics, Ghent University,}\\
{\small Krijgslaan 281, B--9000 Gent, Belgium} \\ {\small tom.mestdag@uantwerpen.be}
}

\maketitle

\begin{abstract}
We apply methods of the so-called `inverse problem of the calculus of variations' to the stabilization of an equilibrium of a class of two-dimensional controlled mechanical systems. The class is general enough to include, among others, the inverted pendulum on a cart and the inertia wheel pendulum. By making use of a condition that follows from Douglas' classification, we derive  feedback controls for which the control system is variational. We then use the energy of a suitable controlled Lagrangian to provide a stability criterion for the equilibrium.

\vspace{3mm}

\textbf{Keywords:} controlled Lagrangians, inverse problem, stability, Lyapunov function.

\vspace{3mm}

\textbf{2010 Mathematics Subject Classification:} 70H03, 70Q05, 49N45.
\end{abstract}


\section{Introduction}

With a view on achieving a desired goal dynamical systems are often modelled in such a way that a controlled quantity may influence its behavior. In this paper we will consider mechanical systems with a (possibly unstable) equilibrium. We will be interested in making structural modifications to this
system by adding extra controlled external forces or torques to it, in order to arrive
at a controlled system where the equilibrium has become stable. In a series of papers by Bloch {\em et
al.}\ (starting with the paper \cite{BLMfirst}) it was shown that, subject to a number of assumptions, some of
those controlled systems can be seen to be equivalent with the Euler-Lagrange equations of a new Lagrangian, the so-called
controlled Lagrangian. This controlled Lagrangian is a modification of the
original Lagrangian of the system by means of some control parameters. Sufficient conditions for this
situation to occur have been derived  in \cite{BLMfirst}  and the technique is often referred to as `the matching theorems'. Since its first appearance the method of controlled Lagrangians and the matching conditions have been successfully applied in many papers (see e.g.\ \cite{survey} for many references). The main advantage of the approach is that, once we know that the controlled system is Lagrangian, we may use energy methods and the available freedom in the choice of controls to analyze the stability of equilibria. A paper that focuses on the method for two-dimensional systems is e.g.\ \cite{Chang}.

In this paper we want to take a somewhat different approach to the matching theorems. The main idea is that we want to rephrase some aspects of the issue in terms of the so-called inverse problem of the calculus of variations.  Roughly speaking, the inverse problem of the calculus of variations concerns the question whether a given dynamical system can be
derived from a variational principle, i.e.\ whether it can be given, possibly in an equivalent form, by a set of Euler-Lagrange equations. If that is the case, we say that the dynamical system is `variational'. This challenging question has a
long history (see e.g.\ \cite{2008KP}): the necessary and sufficient conditions for this to happen are named after Helmholtz,
for example, and a distinguished paper on the solution of the problem for second-order systems with two
degrees of freedom is the one by Douglas \cite{Douglas}.

A recent paper that surveys both some aspects of the method of controlled Lagrangians and of the before-mentioned inverse problem is \cite{survey}.
The goal of the present paper is to give conditions for the stabilization of an unstable equilibrium for a concrete class of two-dimensional underactuated mechanical systems. We will come to our class of interest in two steps of specification. First we will assume that the Lagrangian of the original mechanical system with configuration variables $(x,y)$ is time-independent, that it has $x$ as cyclic variable, and that it is of the form
$$
\mathcal{L}(x,y,\dot x,\dot y)=\frac{1}{2}\left( a_{11}\dot{x}^2+2a_{12}(y)\dot{x}\dot{y}+a_{22}(y)\dot{y}^2 \right)-{\mathcal V}(y) \, ,
$$
where $a_{11}$ is a non-zero constant. We also assume that we may add controlled external forces to the system in such a way that the control subbundle is $\mbox{span}\left\{ dx \right\}$. The equations of motion are then of the type
\begin{eqnarray}
\frac{d}{dt}\left(\frac{\partial {\mathcal L}}{\partial \dot{x}}\right)&=&u \, , \nonumber\\
\frac{d}{dt}\left(\frac{\partial {\mathcal L}}{\partial \dot{y}}\right)-\frac{\partial {\mathcal L}}{\partial y}&=&0 \, . \label{controlsode}
\end{eqnarray}
 This class of systems is general enough to include, among others, the two main examples that have been discussed abundantly throughout the literature, namely the inverted pendulum on a cart and the inertia wheel pendulum (see Section~\ref{secex}). The  second-order ordinary differential equations  (\ref{controlsode}) can be written in normal form as
\begin{eqnarray*}
\ddot{x}&=&a^{12}\left( -\frac{\partial {\mathcal V}}{\partial y}-\frac{1}{2}\frac{\partial a_{22}}{\partial y}\dot{y}^2 \right)+a^{11}\left( -\frac{\partial a_{12}}{\partial y}\dot{y}^2+u \right) \, , \\
\ddot{y}&=&a^{22}\left( -\frac{\partial {\mathcal V}}{\partial y}-\frac{1}{2}\frac{\partial a_{22}}{\partial y}\dot{y}^2 \right)+a^{12}\left( -\frac{\partial a_{12}}{\partial y}\dot{y}^2+u \right) \, , \end{eqnarray*}
where $(a^{ij})$ is the inverse matrix of $(a_{ij})$. When we only consider controls of the form $u(y,\dot y)$, the above equations are of the type
\[
{\ddot x}=f^1(y, {\dot y}), \qquad {\ddot y}=f^2(y, {\dot y}).
\]
Our first goal is to understand when such a system is variational. For that purpose, we will rely (in Section~\ref{secvar}) on Douglas' classification \cite{Douglas} for two-dimensional systems, although we will use the geometric approach to the inverse problem that has been proposed in the papers \cite{CPST1999,94CSMBP,SCM1998,STP2002}  (see also Section~\ref{secDouglas}).
For most of the cases Douglas was able to decide whether or not the systems in it are variational. In our approach, the matching conditions are replaced with sufficient conditions for the system to lie in one of the variational cases of Douglas' classification.

If, in a second step, we only allow controls of the type $u(y,\dot y) =M(y)\dot{y}^2+N(y)$, the equations (\ref{controlsode}) may even be written in the form
\[
 \ddot{x}=T(y)\dot{y}^2+U(y),\qquad
\ddot{y}=R(y)\dot{y}^2+S(y).
\]
Our restriction in the second step is motivated by results in the literature. The condition that $a_{11}$ is constant is in fact (for two-dimensional systems) one of the so-called simplified matching conditions of \cite{BLMfirst}. Under these assumptions the authors derive for the system (\ref{controlsode}) a feedback control which may be written as
\begin{eqnarray} \label{cbloch}
u&=&\frac{1}{\sigma}\left( \frac{\partial a_{12}}{\partial y}-\frac{a_{12}}{A_{22}}\left( \frac{1}{2}\frac{\partial a_{22}}{\partial y}-\left( 1-\frac{1}{\sigma} \right)\frac{a_{12}}{a_{11}}\frac{\partial a_{12}}{\partial y} \right) \right)\dot{y}^2-\frac{1}{\sigma}\frac{a_{12}}{A_{22}}\frac{\partial {\mathcal V}}{\partial y} \, ,
\end{eqnarray}
 where $\sigma$ is a constant and $A_{22}=a_{22}-\frac{a_{12}^2}{a_{11}}\left(1-\frac{1}{\sigma}\right)$. This $u$ clearly fits into the class of controls that we wish to consider.

The strategy in the examples consists of pushing the controlled system into one of the cases of Douglas' classification that is known to be variational. In Section~\ref{secvar2} we will give a necessary and sufficient condition for a system of the above type to be variational.
Our approach is, in a sense, more general than the one of the matching conditions. In \cite{BLMfirst}, the matching conditions are a consequence of an a priori assumption on the relation between the original Lagrangian of the original system, and the controlled Lagrangian of the controlled system. In our approach, no such assumption needs to be imposed. Moreover, we will show in Section~\ref{secvar2} that if the system is variational, it admits a Lagrangian function of mechanical type, that is, a Lagrangian whose kinetic energy is related to a positive-definite metric. In that case, the energy function of this Lagrangian is always a first integral of the system.  We next show, in Section~\ref{stabres}, that under certain further conditions it can be used as a Lyapunov function. We conclude the section with a sufficient condition, written in terms of the system, that guarantees stability of the equilibrium.
	
In Section~\ref{secex} we discuss some examples. For the example of the inverted pendulum on a cart we give new feedback controls and we also recover the ones given in \cite{BLMfirst}.  For this class of controls, we provide a (slightly) wider class of Lagrangians.

The goal of Section~\ref{secas} is to achieve asymptotic stability by allowing dissipative forces into the picture. We first add in extra controls, to make the system equivalent to Euler-Lagrange equations with external dissipative forces. We then give sufficient conditions for asymptotic stability,  based on LaSalle's invariance principle. We illustrate this method by means of an example. In the final section we mention some directions for future work.

\section{The inverse problem} \label{secDouglas}

The inverse problem of the calculus of variations can be phrased in many subtly different versions, see e.g.\ \cite{2008KP,Saunders} for two review articles. The problem that we are concerned with poses the following question: Suppose given a system of second-order ordinary differential equations, given in normal form
\begin{equation} \label{sode}
\ddot{q}^i=f^i(t,q,\dot{q}), \qquad\qquad i=1,\ldots, n.
\end{equation}
 Is it possible to determine whether there exists a non-degenerate multiplier matrix $\left(g_{ij}(t,q,\dot{q})\right)$ such that the relation
\begin{equation} \label{IP}
g_{ij}(t,q,\dot{q})\left( \ddot{q}^j-f^j(t,q,\dot{q}) \right)=\frac{d}{dt}\left(\frac{\partial L}{\partial \dot{q}^i}\right)-\frac{\partial L}{\partial q^i}
\end{equation}
holds for some Lagrangian function $L(t,q,\dot{q})$? If such a multiplier matrix exists, we say that the system (\ref{sode}) is variational. In that case, the Lagrangian is related to the multiplier matrix in such a way that\
\[
\left(g_{ij}\right)=\left(\frac{\partial^2 L}{\partial \dot{q}^i \partial \dot{q}^j}\right).
 \]
 The non-degeneracy of the multiplier matrix ensures that the Lagrangian $L$ is regular. The so-called Helmholtz conditions are a set of necessary and sufficient conditions for a multiplier to exist. They constitute a mixed set of algebraic and PDE equations in the unknown functions $\left(g_{ij}(t,q,\dot{q})\right)$.

Douglas, in \cite{Douglas}, provided a classification of the problem for dimension $n=2$. For most of the subcases, he was able to conclude whether or not a Lagrangian exists. Douglas' analysis has led the authors of \cite{94CSMBP} to propose a generalization of the first broad classification of Douglas to arbitrary dimensions $n$, based on properties of the so-called Jacobi endomorphism $\Phi$ and the canonical covariant derivative $\nabla$. Both operators are essentially defined by the geometry of the `second-order ordinary differential equations field $\Gamma$' ({\sc sode} for short) that is generated by the system (\ref{sode}).  In the approach of \cite{CPST1999,94CSMBP,STP2002}, the system (\ref{sode}) is represented by the vector field
\[
\Gamma =\fpd{}{t}+ {\dot q}^i \fpd{}{q^i} + f^i(t,q,\dot q)\fpd{}{{\dot q}^i}
\]
on the first jet bundle $J^1\pi$ of a bundle $\pi:E=\R\times Q\to \R$. Here, $Q$ is the configuration manifold of the system. We will use the notation $\pi_1$ for the projection $J^1\pi \to E$. We will refer throughout the paper to sections of the pullback bundle $\pi_1^*(TE) \to J^1\pi$
as vector fields along $\pi_1$ and denote the set of such sections by $\vectorfields{\pi_1}$. For most of our purposes one may think of $Q$ as being $\R^n$, and of $\pi$ and $\pi_1$ as the projections $\R^{n+1} \to \R, (t,q) \mapsto t$ and $\R^{2n+1} \to \R^{n+1}, (t,q, {\dot q}) \mapsto (t,q)$, respectively. Vector fields along $\pi_1$ can then be represented by objects of the type $X^0(t,q,{\dot q}) \partial / \partial t + X^i(t,q,{\dot q}) \partial / \partial q^i$. One particular example of a vector field along $\pi_1$ is the so-called canonical vector field ${\mathbf T} = \partial / \partial t +{\dot q}^i \partial / \partial q^i$, but also vector fields on $Q$ and on $E$ can be thought of as being vector fields along $\pi_1$. In this way, one may see that the set $\{ {\mathbf T}, \partial / \partial q^i\}$ locally spans $\vectorfields{\pi_1}$.

It is well-known that the {\sc sode} $\Gamma$ defines a non-linear connection which ensures that every vector field $Z$ on $J^1\pi$  can be split in a horizontal and a vertical part (see e.g.\ \cite{94CSMBP,2008KP,willyreview}). This observation leads to the definition of two operators. The first, $\nabla: \vectorfields{\pi_1} \to \vectorfields{\pi_1}$, is a degree 0 derivation, which means that, for functions $F\in \cinfty{J^1\pi}$ and vector fields $X \in \vectorfields{\pi_1}$ along $\pi_1$, it satisfies
\[
\nabla(FX) = \Gamma(F)X + F \nabla X.
\]
For what follows, we only need its defining action on the basis $\{ {\mathbf T}, \partial / \partial q^i\}$:
\[
\nabla {\mathbf T} = 0, \qquad \nabla \fpd{}{q^j} = \Gamma^i_j(t,q,\dot q) \fpd{}{q^i}, \mbox{\quad with \quad} \Gamma^i_j =  -\frac{1}{2}\fpd{f^i}{{\dot q}^j}.
\]
The second operator $\Phi$ defines a (1,1)-tensor field along $\pi_1$, meaning that $\Phi(FX) = F\Phi(X)$. We may write it locally as
\[
\Phi = \Phi^i_j(t,q,\dot q) \fpd{}{q^i} \otimes (dq^j-{\dot q}^j dt), \mbox{\quad with \quad} \Phi^i_j = -\fpd{f^i}{q^j} - \Gamma^k_j\Gamma^i_k - \Gamma(\Gamma^i_j).
\]
The operation $\nabla$ can be further extended by duality to arbitrary tensor fields along $\pi_1$. In particular, $\nabla\Phi$ stands for the (1,1)-tensor field along $\pi_1$, given by
\[
(\nabla\Phi)(X) = \nabla(\Phi(X)) - \Phi(\nabla X).
\]
The coefficients of $\nabla\Phi = (\nabla\Phi)^i_j \partial/\partial q^i \otimes (dq^j-{\dot q}^j dt)$ are then
\begin{equation} \label{nablaPhi}
(\nabla\Phi)^i_j = \Gamma(\Phi^i_j) + \Gamma_m^i\Phi^m_j - \Gamma^m_j\Phi^i_m.
\end{equation}

The last operator we need is the vertical derivative  $D^v_X$. For each $X\in\vectorfields{\pi_1}$ it maps vector fields along $\pi_1$ to vector fields along $\pi_1$. It can be defined by requiring that it vanishes on  both ${\mathbf T}$ and the coordinate vector fields $\partial/\partial q^i$, and that it satisfies $D^v_X (FY)  = X^v(F)Y + F D^v_XY$ for all $F\in\cinfty{J^1\pi}$ and $Y \in \vectorfields{\pi_1}$.

The before mentioned Helmholtz conditions can be written in a form that makes use of the above geometric calculus. In e.g.\ \cite{SVCM} it is shown that a regular Lagrangian exists  for the system (\ref{sode}) if and only if there is a non-degenerate symmetric (0,2) tensor field $g$ along $\pi_1$  (i.e.\ a  multiplier) such that
\begin{equation} \label{Helmholtz}
g({\mathbf T},X) = 0,\qquad g(\Phi(X),Y) = g(X,\Phi(Y)), \qquad (D^v_Xg)(Y,Z) = (D^v_Yg)(X,Z), \qquad \nabla g=0 \, ,
\end{equation}
for arbitrary $X,Y,Z \in \vectorfields{\pi_1}$. We prefer to use in this paper this geometric approach to the Helmholtz conditions, over the more analytical style of Douglas' paper, for the reason that it can be conveniently applied (in the next section) to  a (non-coordinate) frame of eigenvectors of $\Phi$. More details on this calculus may be found in the review paper \cite{willyreview}.

The $\Phi$-condition represents an algebraic relation between the different components of the multiplier $g$. As such it forms the basis of the classification of the problem in several subcases.
For the rest of the paper, we will only consider two-dimensional systems (i.e.\ $n=2$). The first broad classification of \cite{94CSMBP} (and of \cite{Douglas}) is given by the following subcases:

- Case I: $\Phi$ is a multiple of the identity tensor $I$.

- Case II: $\nabla\Phi$ is a linear combination of $\Phi$ and $I$.

- Case III: $\nabla^2\Phi$ is a linear combination of $\nabla\Phi$, $\Phi$ and $I$.

- Case IV: $\nabla^2\Phi$, $\nabla\Phi$, $\Phi$ and $I$ are linearly independent.

\section{Discussion of Douglas' classification} \label{secvar}

As was mentioned in the Introduction, we are only interested in {\sc sode}s $\Gamma$ which exhibit very special symmetry properties. In this section we assume that there exists a coordinate change $(t,q^1,q^2) \mapsto (t,x=x(q^1,q^2),y=y(q^1,q^2))$ for which the second-order differential equations take the form
\begin{equation}\label{sodeofint}
{\ddot x}=f^1(y, {\dot y}), \qquad {\ddot y}=f^2(y, {\dot y}).
\end{equation}

\begin{lemma} The {\sc sode} $\Gamma$ takes the form (\ref{sodeofint}) if and only if $[\Gamma, \partial/\partial t]=0$ and if
there exists a vector field $E_1$ on $Q$ such that $\Phi(E_1)=\nabla E_1 =0$.
\end{lemma}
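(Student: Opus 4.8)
The plan is to match the three structural features of the normal form (\ref{sodeofint})---that $f^1$ and $f^2$ depend on neither $t$, nor $\dot x$, nor $x$---with the two intrinsic hypotheses, reading the latter off in a chart adapted to $E_1$. I would begin with the time variable. Since $\Gamma$ has components $(1,\dot q^i,f^i)$ and $\partial/\partial t$ differentiates only the $f^i$, a one-line bracket computation gives $[\Gamma,\partial/\partial t]=-\sum_i(\partial f^i/\partial t)\,\partial/\partial\dot q^i$, so that $[\Gamma,\partial/\partial t]=0$ is exactly the statement that the $f^i$ are autonomous. I would also record that the admissible changes $(t,q)\mapsto(t,x,y)$ fix $\partial/\partial t$ (because $x,y$ are $t$-independent), so this condition is unaffected by the change of coordinates used below.

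For the forward implication, assume (\ref{sodeofint}) and take $E_1=\partial/\partial x$, a genuine vector field on $Q$. Independence of $f^1,f^2$ from $\dot x$ makes the connection coefficients $\Gamma^i_x=-\tfrac12\,\partial f^i/\partial\dot x$ vanish, hence $\nabla E_1=\Gamma^i_x\,\partial/\partial q^i=0$. Substituting $\Gamma^i_x=0$ (for all $i$) into the formula for $\Phi$ collapses it to $\Phi^i_x=-\partial f^i/\partial x$, which vanishes because $f^1,f^2$ are $x$-independent; thus $\Phi(E_1)=0$ as well.

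The converse carries the real content. Taking $E_1$ to be (locally) non-vanishing, I would apply the flow-box theorem to choose coordinates $(x,y)$ on $Q$ with $E_1=\partial/\partial x$; this is a point transformation of $Q$ alone, hence of the admissible type and harmless to the time direction. Because $\nabla$ and $\Phi$ are defined geometrically, the hypotheses $\nabla E_1=\Phi(E_1)=0$ survive the change of chart and now read $\Gamma^i_x=0$ and $\Phi^i_x=0$. Running the previous computation backwards, $\Gamma^i_x=0$ forces $\partial f^i/\partial\dot x=0$, and then $\Phi^i_x=-\partial f^i/\partial x=0$ forces $\partial f^i/\partial x=0$; combined with autonomy this delivers $f^i=f^i(y,\dot y)$, i.e.\ the form (\ref{sodeofint}).

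The main obstacle is not computational but structural: it lies in the converse's use of $E_1$. One must insist that $E_1$ be nowhere zero so that it can be rectified to a coordinate field, verify that the rectifying diffeomorphism is a transformation of the configuration space $Q$ that does not couple to $t$ or to the velocities (so that it stays within the class $(t,q)\mapsto(t,x,y)$), and use the tensorial, frame-independent character of $\nabla$ and $\Phi$ to evaluate the two hypotheses componentwise in the adapted chart. Once these points are secured, the equivalence follows from directly reading the component expressions for $\Gamma^i_j$ and $\Phi^i_j$.
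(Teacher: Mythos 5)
Your proposal is correct and follows essentially the same route as the paper: straighten $E_1$ out to $\partial/\partial x$, read $\nabla E_1=0$ as $\Gamma^i_1=0$ (no $\dot x$-dependence), whence $\Phi(E_1)=0$ reduces to $\partial f^i/\partial x=0$, with $[\Gamma,\partial/\partial t]=0$ handling the $t$-dependence. You merely make explicit two points the paper leaves implicit (the bracket computation and the nowhere-vanishing hypothesis needed for the flow-box rectification), which is fine but not a different argument.
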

\begin{proof} The first condition says that the righthand sides of the second-order differential equations do not depend on $t$. If the {\sc sode} takes the special form (\ref{sodeofint}),  the vector field $E_1=\partial/\partial x$ satisfies the conditions. Conversely, if such a vector field $E_1 = X^i(q) \partial/\partial q^i$ on $Q$ exists, we may always straighten it out to become the vector field $\partial/\partial x$. In these coordinates, the condition $\nabla E_1 =0$ becomes $\Gamma^i_1 =0$, which means that the functions $f^i$ do not depend on ${\dot x}$. With that, the condition $\Phi(E_1)=0$ becomes $\Phi^i_1=0$, from which it follows that the functions $f^i$ do not depend on $x$ either. Hence, the system takes the form (\ref{sodeofint}).
\end{proof}

The specific form of the {\sc sode} (\ref{sodeofint}) narrows the number of cases in the Douglas classification to which it may belong. Since $\Phi(E_1)=0$, $\Phi$ has always  eigenvalue zero, with eigenvector $E_1=\partial/\partial x$. The other eigenvalue is given by $\Phi^2_2$. If non-zero, a corresponding eigenvector is
\[
E_2 = \fpd{}{y} + \nu \fpd{}{x}, \qquad \mbox{with\quad} \nu = \frac{\Phi^1_2}{\Phi^2_2}.
\]
The Douglas Case is therefore principally determined by the algebraic and geometric multiplicity of the zero eigenvalue.

Since also $\nabla\Phi(E_1)=\nabla^2\Phi(E_1)=0$, $\nabla^2\Phi,\nabla\Phi$ and $\Phi$ can never be pointwise linearly independent and therefore the system may never belong to Case IV. In coordinates where $E_1=\partial/\partial x$, the system will lie in Case I if and only if $\Phi^1_2 = \Phi^2_2 = 0$. It will belong to Case II when $\Phi^1_2$ and $\Phi^2_2$ are not both zero, but
\begin{equation} \label{caseIandII}
(\nabla\Phi)^1_2 \Phi^2_2 - (\nabla\Phi)^2_2 \Phi^1_2 =0.
\end{equation}

The system will belong to Case III whenever  $(\nabla\Phi)^1_2 \Phi^2_2 - (\nabla\Phi)^2_2 \Phi^1_2 \neq 0$. In that case, it is clear that the determinant of the commutator $[\Phi,\nabla\Phi]$ does always vanish, which is the defining property for the system to lie in subcase Case IIIb. Douglas has concluded in \cite{Douglas} that this case is never variational.

Case II has been further subdivided in Case IIa ($\Phi$ has distinct eigenvalues) and Case IIb (the eigenvalues of $\Phi$ coincide).
Both Cases IIa and IIb are further subdivided, according to a relation on a (1,2) tensor field along $\pi_1$, called the Haantjes tensor $H_\Phi(X,Y) = C^v_\Phi(\Phi(X),Y) - \Phi(C^v_\Phi(X,Y))$ of $\Phi$ in \cite{CPST1999,94CSMBP}. Case IIa1 and Case IIb1 correspond with the situation where $H_\Phi=0$. This tensor field vanishes when all the commutators $C^v_\Phi(X,Y) =  [D^v_X\Phi,\Phi](Y)$ vanish. For the {\sc sode} (\ref{sodeofint}) we get
\[
C^v_\Phi =  \left(\Phi^2_2 \fpd{\Phi^1_2}{{\dot y}} - \Phi^1_2 \fpd{\Phi^2_2}{{\dot y}} \right) dy \otimes dy \otimes \fpd{}{x}.
\]
From this expression, we may conclude that the last term in the Haantjes tensor always vanishes. The only non-vanishing term in the Haantjes tensor is then $H_{\Phi} (\partial/\partial y, \partial/\partial y) = \Phi_2^2 C^v_\Phi(\partial/\partial y,\partial/\partial y)$. The necessary and sufficient condition for the Haantjes tensor to vanish is therefore
\begin{equation}\label{Haan}
\Phi^2_2\left(\Phi^2_2 \fpd{\Phi^1_2}{{\dot y}} - \Phi^1_2 \fpd{\Phi^2_2}{{\dot y}}\right) = 0.
\end{equation}

If the system belongs to Case IIb (i.e.\ if $\Phi^2_2=0$) the above condition is trivially satisfied. Douglas \cite{Douglas} has one further subdivision of Case IIb1, depending on a further relation of the double eigenvalue of $\Phi$. In the special case when that double eigenvalue happens to be zero, Douglas' Case IIb1' is characterized by the vanishing of the expression
\[
\frac{\partial^2}{\partial {\dot x}^2} \left(\fpd{f^1}{\dot x} - \fpd{f^2}{{\dot y}}\right).
\]
This is clearly the case for the system (\ref{sodeofint}). We may therefore conclude that if the system (\ref{sodeofint}) belongs to Case IIb, it can only lie in Case IIb1'. For this case Douglas concluded that it is always variational.

Consider now the situation where the system (\ref{sodeofint}) belongs to Case IIa (i.e.\ $\Phi^2_2\neq 0$).  Since the Haantjes tensor has at most 1 non-vanishing component what is called Case IIa3 can never occur. The only possibilities are therefore Case IIa1 (with vanishing Haantjes tensor) and Case IIa2 (the 1 component of the Haantjes tensor does not vanish). Douglas concluded that Case IIa1 is always variational (the same is true in general dimension $n$, see \cite{CPST1999}). The necessary and sufficient condition for this to happen is
\[
\Phi^2_2 \fpd{\Phi^1_2}{{\dot y}} - \Phi^1_2 \fpd{\Phi^2_2}{{\dot y}}= 0.
\]
For a system in Case IIa2 to be variational, further requirements hold.

From all this we may conclude:
\begin{prop} If the {\sc sode} (\ref{sodeofint}) is variational then condition (\ref{caseIandII}) is satisfied. If the system satisfies the further assumption (\ref{Haan}), condition (\ref{caseIandII}) is both necessary and sufficient  for the system to be variational.
\end{prop}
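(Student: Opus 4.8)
The plan is to read off the proposition from the case analysis carried out above: both implications amount to tracking, as a function of the two quantities appearing in (\ref{caseIandII}) and (\ref{Haan}), into which of Douglas' subcases the {\sc sode} (\ref{sodeofint}) is forced, and then invoking Douglas' verdict (variational or not) for that subcase. Since $\Phi(E_1)=\nabla\Phi(E_1)=\nabla^2\Phi(E_1)=0$ with $E_1=\partial/\partial x$, Case IV is excluded from the outset, so the only possibilities are Cases I, II and III.

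For the first claim I would argue by contraposition. Suppose (\ref{caseIandII}) fails, i.e.\ $(\nabla\Phi)^1_2\Phi^2_2-(\nabla\Phi)^2_2\Phi^1_2\neq 0$. By the discussion preceding the statement this places the system in Case III, and since $\Phi$ and $\nabla\Phi$ share the eigenvector $E_1$ the commutator $[\Phi,\nabla\Phi]$ is everywhere degenerate, so the system lies in the subcase IIIb. Douglas \cite{Douglas} showed that IIIb is never variational; hence a variational system must satisfy (\ref{caseIandII}), which is the assertion.

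For the second claim the necessity of (\ref{caseIandII}) is exactly what has just been proved, so only sufficiency under the extra hypothesis (\ref{Haan}) remains. Assuming (\ref{caseIandII}), the system sits in Case I or Case II, and I would split according to the value of $\Phi^2_2$. If $\Phi^1_2=\Phi^2_2=0$ we are in Case I, the most classical of Douglas' cases, where $\Phi$ is a multiple of the identity and the problem is always variational. If $\Phi^2_2=0$ but $\Phi^1_2\neq 0$ we are in Case IIb, which by the computation above can only be the (always variational) subcase IIb1'. Finally, if $\Phi^2_2\neq 0$ we are in Case IIa, and there (\ref{Haan}) reduces to $\Phi^2_2\,\fpd{\Phi^1_2}{\dot y}-\Phi^1_2\,\fpd{\Phi^2_2}{\dot y}=0$, the vanishing of the single surviving component of the Haantjes tensor; this is precisely the passage from the non-variational Case IIa2 to the always variational Case IIa1. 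In every case the system is variational, which proves sufficiency.

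The real content is therefore entirely borrowed: Douglas' classification supplies the verdicts on IIIb, IIb1', IIa1 and IIa2, and the geometric reformulation of Section~\ref{secDouglas} supplies the explicit coefficients that pin down which subcase occurs. The only point that needs care is that these subcases are defined pointwise, so strictly the argument should be localised to regions on which $\Phi^2_2$ either vanishes identically or is nowhere zero; the locus where $\Phi^2_2$ vanishes without doing so identically is the genuinely delicate set and, for a global statement, would have to be treated by a separate continuity argument. I expect this regionwise bookkeeping, rather than any single computation, to be the main obstacle.
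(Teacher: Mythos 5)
Your proof is correct and follows essentially the same route as the paper's: Case IV is excluded by the shared kernel vector $E_1$, a variational system cannot lie in Case III because that case collapses to the never-variational subcase IIIb (giving necessity of (\ref{caseIandII})), and under the extra hypothesis (\ref{Haan}) the system lands in Case I, Case IIb1$'$ or Case IIa1, all variational --- indeed you are slightly more explicit than the paper, which silently subsumes Case I and, like you, leaves the regionwise (constant-rank) nature of the classification implicit. The one small inaccuracy is calling Case IIa2 ``non-variational'': Douglas only imposes \emph{further} requirements there, so IIa2 systems can be variational; but since your argument only uses that IIa1 is always variational, this does not affect the proof.
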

\begin{proof} For systems of the type (\ref{sodeofint}) Case IV is excluded. If the system is variational, it can not belong to Case III, since Case IIIb is never variational. It must therefore lie in either Case I or II, which is characterized by the condition (\ref{caseIandII}). If (\ref{caseIandII}) and (\ref{Haan}) are both satisfied, the Haantjes tensor vanishes. If so, we must be either in Case IIa1 or Case IIb1', both of which are variational.
\end{proof}

Case I is characterized by the fact that both $\Phi^1_2=\Phi^2_2=0$. For Case IIb1, $\Phi_2^2 =(\nabla\Phi)^2_2=0$, but $\Phi_2^1 \neq 0$.
 For Case IIa1 $\Phi_2^2\neq 0$ and $(\nabla\Phi)^1_2 = \nu (\nabla\Phi)^2_2$.

\section{Conditions for variationality} \label{secvar2}
Our interest in systems of the type (\ref{sodeofint}) has been motivated in the Introduction by the fact that control systems of the type (\ref{controlsode}) with controls $u(y,\dot y)$ all fall in this category. In the second step
we limit the suitable controls to those of the quadratic type   $u(y,\dot y) =M(y)\dot{y}^2+N(y)$. As a result, the system  (\ref{controlsode}), when written in normal form becomes of the type
\begin{equation}
 \ddot{x}=T(y)\dot{y}^2+U(y), \\ \quad
\ddot{y}=R(y)\dot{y}^2+S(y). \label{SODEq}
\end{equation}
For later use, we give a few characterizations for its variationality below. In what follows we will denote a derivative  with respect to $y$ simply by a prime '.

\begin{prop} \label{pvar}
 The {\sc sode} (\ref{SODEq}) is variational if and only if
\begin{eqnarray} 0&=&
2 T (S')^2+S^2 \left(T R'-R T'\right)-2 R S' U'
+U' S''-S' U''   \nonumber\\ &\,&   \hspace{1cm} +S\left[S' T'+R^2 U'  -R' U'-T S''+R \left(-T S'+U''\right)\right].\label{rank1}
\end{eqnarray}

On the basis of the value of $\Phi^2_2$ we can further specify:
\begin{enumerate} \item When $\Phi^2_2 = 0$ the {\sc sode} (\ref{SODEq}) is always variational.
\item When $\Phi^2_2\neq 0$, the following statements are equivalent:
\begin{itemize} \item the {\sc sode} (\ref{SODEq}) is variational,
\item $(U-\nu S)'=0$,
\item  $\nu' = T-R\nu$.
\end{itemize}
\end{enumerate}
\end{prop}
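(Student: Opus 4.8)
The plan is to specialize the general machinery of Section~\ref{secDouglas} to the {\sc sode} (\ref{SODEq}) and to lean on the previous Proposition, which already tells us that condition (\ref{caseIandII}) decides variationality as soon as the Haantjes tensor vanishes. So the first task is to compute, via $\Gamma^i_j = -\frac{1}{2}\partial f^i/\partial \dot q^j$ and the definition of $\Phi^i_j$, the relevant components of $\Phi$. For (\ref{SODEq}) one finds $\Gamma^1_2 = -T\dot y$, $\Gamma^2_2 = -R\dot y$ and, after the cancellations typical of these computations,
\[
\Phi^1_2 = TS - U', \qquad \Phi^2_2 = RS - S',
\]
both of which are \emph{independent of $\dot y$}. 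This is the structural fact that drives the whole proof: since $\partial\Phi^1_2/\partial\dot y = \partial\Phi^2_2/\partial\dot y = 0$, the Haantjes condition (\ref{Haan}) holds automatically, and so by the previous Proposition the system is variational if and only if (\ref{caseIandII}) holds.

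To turn (\ref{caseIandII}) into the explicit relation (\ref{rank1}) I would next compute the two components of $\nabla\Phi$ from (\ref{nablaPhi}). A short calculation gives $(\nabla\Phi)^2_2 = \dot y\,(\Phi^2_2)' = \dot y(R'S + RS' - S'')$ and $(\nabla\Phi)^1_2 = \dot y\,(T'S + 2TS' - U'' - RU')$, both linear in $\dot y$. Substituting these together with the values of $\Phi^1_2,\Phi^2_2$ into $(\nabla\Phi)^1_2\Phi^2_2 - (\nabla\Phi)^2_2\Phi^1_2 = 0$ and cancelling the common factor $\dot y$ leaves a relation purely in $y$; expanding it and comparing term by term with (\ref{rank1}) establishes the first equivalence (the two expressions differ only by an overall sign). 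This expansion is the one genuinely tedious step, and the only real risk is a bookkeeping slip in collecting the roughly eleven monomials.

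For item~(1), when $\Phi^2_2 = 0$ identically we also have $(\Phi^2_2)' = 0$, hence $(\nabla\Phi)^2_2 = 0$ and $(\nabla\Phi)^1_2\Phi^2_2 = 0$; thus (\ref{caseIandII}) holds and the system is variational. Conceptually this is just the statement that such a system sits in Case I (if moreover $\Phi^1_2 = 0$) or in Case IIb, which for (\ref{SODEq}) can only be the always-variational Case IIb1'.

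For item~(2), when $\Phi^2_2 \neq 0$ we are in Case IIa, and dividing (\ref{caseIandII}) by $\Phi^2_2$ shows that variationality is equivalent to the Case IIa1 relation $(\nabla\Phi)^1_2 = \nu(\nabla\Phi)^2_2$, with $\nu = \Phi^1_2/\Phi^2_2$. I would then close the chain of equivalences using the defining identity $\nu\Phi^2_2 = \Phi^1_2$. Differentiating it gives $\nu(\Phi^2_2)' = (\Phi^1_2)' - \nu'\Phi^2_2$, and together with $(\nabla\Phi)^2_2 = \dot y(\Phi^2_2)'$ and $(\nabla\Phi)^1_2 = \dot y\big((\Phi^1_2)' + TS' - RU'\big)$ one obtains
\[
\tfrac{1}{\dot y}\big((\nabla\Phi)^1_2 - \nu(\nabla\Phi)^2_2\big) = \nu'(RS - S') - (RU' - TS').
\]
Since a direct computation shows $T - R\nu = (RU' - TS')/(RS - S')$, the vanishing of the right-hand side is exactly $\nu' = T - R\nu$. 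Finally, substituting $\nu' = T - R\nu$ into $(U - \nu S)' = U' - \nu'S - \nu S'$ and using $\nu(RS - S') = TS - U'$ collapses it to $0$, and the steps reverse (dividing by $S$ where needed), yielding the equivalence with $(U - \nu S)' = 0$. The only obstacle here is keeping track of the harmless genericity assumptions ($\dot y \neq 0$, $\Phi^2_2 \neq 0$, $S \neq 0$) under which the divisions are legitimate.
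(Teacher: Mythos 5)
Your proposal is correct and follows essentially the same route as the paper: the same computation of $\Gamma^i_2$, $\Phi^1_2=TS-U'$, $\Phi^2_2=RS-S'$ and $(\nabla\Phi)^i_2$, the automatic vanishing of the Haantjes condition (\ref{Haan}) reducing variationality to (\ref{caseIandII}), and the same identities behind the equivalences in item (2) --- including your correct observation that the expanded form of (\ref{caseIandII}) is (\ref{rank1}) up to an overall sign. The only (immaterial) difference is ordering: the paper verifies $(U-\nu S)'=0$ and $\nu'=T-R\nu$ each directly against (\ref{caseIandII}), while you prove $\nu'=T-R\nu$ first and then deduce $(U-\nu S)'=S\,(T-R\nu-\nu')=0$, flagging the harmless division by $S$ that the paper's own argument also implicitly requires.
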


\begin{proof}
 One easily verifies that for the system (\ref{SODEq}),
\begin{equation}\label{help1}
\Gamma^1_2 = -T {\dot y} , \quad \Gamma^2_2= -R {\dot y},\quad \Phi^1_2 = - U' + ST \quad\mbox{and}\quad \Phi^2_2 = -S' + RS,
\end{equation}
from which it follows that the condition (\ref{Haan}) is always satisfied. The necessary and sufficient condition for variationality is therefore condition (\ref{caseIandII}).

 Since now
\begin{equation} \label{help2}
(\nabla\Phi)^1_2 = {\dot y} (2S'T+ST'-U''-RU'), \quad (\nabla\Phi)^2_2 = {\dot y}(SR'+RS' -S'')
\end{equation}
the first statement in  the proposition follows.

When we take the value of $\Phi^2_2$ into account, we may further specify.

(i) We have already mentioned that the condition $\Phi^2_2=0$ is a sufficient condition for (\ref{SODEq}) to be variational since it implies $(\nabla\Phi)^2_2=0$. Therefore (\ref{caseIandII}) is satisfied.

(ii) In view of the coordinate expression (\ref{nablaPhi}) for $(\nabla\Phi)^i_j$ and the expressions (\ref{help1}) for $\Gamma^i_2$, we may write
\[
(\nabla\Phi)^1_2\Phi^2_2- (\nabla\Phi)^2_2\Phi^1_2 = [(\Phi^1_2)'\Phi^2_2- (\Phi^2_2)'\Phi^1_2 - \Phi^2_2( \Phi^2_2 T - \Phi^1_2 R )] {\dot y}.
\]
On the other hand, with $\nu=\Phi^1_2/\Phi^2_2$,
\begin{eqnarray*}
{\dot y}(U-\nu S)' &=& \frac{{\dot y}}{(\Phi^2_2)^2} \left[ U' (\Phi^2_2)^2 - \Big((\Phi^1_2)'\Phi^2_2- (\Phi^2_2)'\Phi^1_2 \Big) S - \Phi^1_2\Phi^2_2 S' \right] \\
&=&  -    \frac{ S}{(\Phi^2_2)^2} \Big((\nabla\Phi)^1_2\Phi^2_2- (\nabla\Phi)^2_2\Phi^1_2 \Big)     +   \frac{{\dot y}}{(\Phi^2_2)^2} \left[ - S \Phi^2_2( \Phi^2_2 T - \Phi^1_2 R )   +      U' (\Phi^2_2)^2  - \Phi^1_2\Phi^2_2 S'  \right] \\
&=&  -    \frac{ S}{(\Phi^2_2)^2} \Big((\nabla\Phi)^1_2\Phi^2_2- (\nabla\Phi)^2_2\Phi^1_2 \Big)
\end{eqnarray*}
because, in view of (\ref{help1}),
\[
- S ( \Phi^2_2 T - \Phi^1_2 R )   +      U' \Phi^2_2  - \Phi^1_2 S'  = -S(-S'+RS)T +S(-U'+ST)R +U'(-S'+RS) - (-U'+ST)S' =0.
\]

We may also write ${\dot y}\nu' = \Gamma(\nu) = (\Gamma(\Phi^1_2)\Phi^2_2-\Gamma(\Phi^2_2)\Phi^1_2)/(\Phi^2_2)^2$. With that
\[
\dot y \nu' = \frac{(\nabla\Phi)^1_2\Phi^2_2-(\nabla\Phi)^2_2\Phi^1_2}{(\Phi^2_2)^2} + \frac{\Phi^2_2 T - \Phi^1_2 R }{\Phi^2_2}\dot{y} = \frac{(\nabla\Phi)^1_2\Phi^2_2-(\nabla\Phi)^2_2\Phi^1_2}{(\Phi^2_2)^2} + (T-R\nu){\dot y}.
\]
\end{proof}

Remark that a sufficient condition for $\Phi^2_2=0$ is that $S(y)=0$.

The above proposition points to some strategies one may follow in the search for controls $u=M(y)\dot{y}^2+N(y)$ for which equations (\ref{controlsode}) are variational. For such a control law, equations (\ref{controlsode}) become of type (\ref{SODEq}) and the conditions given in Proposition~\ref{pvar} can be interpreted as a PDE in the unknowns $M(y)$ and $N(y)$. In a sense, one may interpret equation (\ref{rank1}) as a generalization (to the current setting) of the matching conditions of \cite{BLMfirst}. We may follow either one of the following paths:
\begin{itemize}
\item Find a control $u$ such that the corresponding {\sc sode} satisfies condition (\ref{rank1}).
\item Find a control $u$ for which $\Phi^2_2 \neq 0$, but the corresponding {\sc sode} satisfies $(U-\nu S)'=0$ (i.e.\ lies in Case IIa1).
\item Find a control $u$ such that the corresponding {\sc sode} satisfies $\Phi^2_2=0$ (i.e.\ lies in Case IIb1').
\item Find a control $u$ such that the corresponding {\sc sode} is such that $S=0$.
\end{itemize}
In this paper we will mainly concentrate on the first and  second strategies. The reason is that Case IIa1 has been shown to be variational in arbitrary dimensions \cite{CPST1999}, which leaves the door open to a possible generalization  of our results to higher dimensional systems. In the examples we will use an ansatz for $N(y)$ and solve the corresponding PDE for $M(y)$ (mainly because $N(y)$ appears with two derivatives in it and $M(\phi)$ with just one). In the next section we will also show that the last strategy is not the best one to follow, in view of the pursuit for stability.

The multipliers $g_{ij}$ of a variational system may in general depend on velocities ${\dot q}$. As a consequence, a Lagrangian of a variational system does not necessarily have to be of `mechanical' type, i.e.\ of the type 'quadratic kinetic energy -- potential'. We first prove that, if the system (\ref{SODEq}) is variational,  we may always find a Lagrangian $L$ of the form $L = g_{ij}{\dot q}^i{\dot q}^j-V(q)$, where  the matrix of multiplier matrix  $g_{ij}$ is independent of velocities, time-independent and positive definite.

\begin{prop} \label{posg}
For a variational {\sc sode} of type (\ref{SODEq}) with $\Phi^2_2\neq 0$ there exists a positive-definite matrix of multipliers $(g_{ij})$ which only depend on $y$ and for which $g_{11}$ is a constant.
\end{prop}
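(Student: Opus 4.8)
The plan is to solve the Helmholtz conditions (\ref{Helmholtz}) explicitly in the eigenframe $\{E_1,E_2\}$ of $\Phi$, which is the natural adapted basis because the $\Phi$-condition becomes diagonal there. First I would record what the algebraic Helmholtz conditions force on $g$. The condition $g(\mathbf{T},X)=0$ kills the time-components, so effectively $g$ is determined by its values on the spatial frame. The $\Phi$-condition $g(\Phi X,Y)=g(X,\Phi Y)$, evaluated on the eigenvectors $E_1$ (eigenvalue $0$) and $E_2$ (eigenvalue $\Phi^2_2\neq 0$), forces the mixed term $g(E_1,E_2)$ to vanish, since $(0-\Phi^2_2)g(E_1,E_2)=0$ and $\Phi^2_2\neq 0$. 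Thus in the eigenframe $g$ is diagonal: $g=\alpha\, \theta^1\otimes\theta^1+\beta\,\theta^2\otimes\theta^2$ for the dual coframe, with two unknown functions $\alpha,\beta$.

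Next I would impose $\nabla g=0$ together with the $D^v$-symmetry condition to pin down the $y$- and $\dot y$-dependence of $\alpha,\beta$. Because the variationality hypothesis places the system in Case IIa1 (here $\Phi^2_2\neq0$ and, by Proposition \ref{pvar}(ii), $(\nabla\Phi)^1_2=\nu(\nabla\Phi)^2_2$), the frame $\{E_1,E_2\}$ behaves well under $\nabla$: one checks that $\nabla E_1=0$ and that $\nabla E_2$ stays proportional to $E_2$, so the diagonal form of $g$ is preserved by $\nabla$ and the condition $\nabla g=0$ reduces to two decoupled first-order transport equations for $\alpha$ and $\beta$ along $\Gamma$. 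Integrating these, and using the closure/$D^v$-symmetry condition to eliminate the apparent $\dot y$-dependence, I expect $\alpha$ to come out constant (this gives the claim $g_{11}=\mathrm{const}$ once we transform back to the coordinate frame, since $E_1=\partial/\partial x$) and $\beta$ to be a positive function of $y$ alone. Transforming $g$ from the eigenframe back to $\{\partial/\partial x,\partial/\partial y\}$ introduces only the factor $\nu=\nu(y)$, so the resulting $g_{ij}$ depend on $y$ alone and are manifestly time-independent.

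The main obstacle I anticipate is the positive-definiteness together with the constancy of $g_{11}$: the transport equations determine $\alpha,\beta$ only up to the freedom in the integration data, and one must show that this freedom can be used to make both eigenvalues strictly positive on the relevant domain while simultaneously forcing $\alpha$ constant. Concretely, after transforming back, $g_{11}=\alpha$, $g_{12}=\alpha\nu$, $g_{22}=\alpha\nu^2+\beta$, so the determinant is $\alpha\beta$; positive-definiteness is equivalent to $\alpha>0$ and $\beta>0$, and I would argue that the integration constant in $\alpha$ can simply be chosen as an arbitrary positive constant (giving $g_{11}$ constant for free), after which $\beta>0$ is secured by a suitable choice of the remaining integration datum. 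The delicate point is verifying that the transport equation for $\alpha$ is genuinely compatible with $\alpha$ constant — i.e.\ that the Case IIa1 condition $(U-\nu S)'=0$ is exactly what makes the $E_1$-transport equation degenerate into $\Gamma(\alpha)=0$ with no obstruction — and that step is where the variationality hypothesis must be used in full.
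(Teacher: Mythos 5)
Your proposal is correct and follows essentially the same route as the paper's proof: diagonalize $g$ in the eigenframe of $\Phi$ via the $\Phi$-condition, use the variationality characterization $\nu'=T-R\nu$ from Proposition~\ref{pvar} to get $\nabla E_1=0$ and $\nabla E_2=-R\dot{y}\,E_2$ (so the diagonal form is preserved and $\nabla g=0$ decouples into $\Gamma(\rho_1)=0$ and $\Gamma(\rho_2)=-2R\dot{y}\,\rho_2$), then choose $\rho_1$ a positive constant and $\rho_2(y)=A\exp\bigl(-2\int^y R(\bar y)\,d\bar y\bigr)$ with $A>0$, after which the $D^v$-condition holds automatically. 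Two inessential slips: the coordinate transform actually gives $g_{12}=-\alpha\nu$ (immaterial, since $\det g=\alpha\beta$ either way), and the $D^v$-condition does not \emph{eliminate} $\dot y$-dependence --- one simply selects the $y$-only solutions of the transport equations, which is permitted because only existence of one suitable multiplier is needed, and then verifies the $D^v$-condition for that choice.
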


\begin{proof}
Under the assumptions in the statement the {\sc sode} belongs to Case IIa1. This means that the Jacobi endomorphism $\Phi$ has two distinct eigenvalues $0$ and $\Phi^2_2$, with eigenvectors $E_1$ and $E_2$, respectively. Remark that in the case under consideration both $E_1$ and $E_2$ may be thought of as vector fields on $Q$ (that is, when considered as vector fields along $\pi_1$, they do not depend on $t$ or on ${\dot q}$). One easily verifies that, after taking (\ref{caseIandII}) into account, we may write that
\[
\nabla E_1 = 0 , \qquad \nabla E_2 = \Gamma^1_2 \fpd{}{x} + \Gamma^2_2 \fpd{}{y} +\Gamma(\nu)\fpd{}{x} = \Gamma^2_2 E_2 = - R{\dot y} E_2,
\]
where we have invoked the third characterization of Proposition~\ref{pvar}.

We will denote the dual basis of 1-forms on $Q$ as $\{\theta^1,\theta^2\}$. From the above it follows that
  \[
  \nabla\theta^1 =0, \qquad \nabla\theta^2 = R{\dot y}\theta^2.
  \]
Since the system is supposed to be variational, we may  assume that solutions of the Helmholtz conditions (\ref{Helmholtz}) exist. We show now that among these solutions there is at least one that satisfies the specifics of the statement. From the $\Phi$-condition we may conclude that the multiplier is of the type $g=\rho_1 \theta^1 \otimes \theta^1 + \rho_2 \theta^2 \otimes \theta^2$. With this, the condition $\nabla g=0$ becomes
\[
\Gamma(\rho_1) =0, \qquad \Gamma(\rho_2) = -2 R {\dot y}\rho_2.
\]
We are not interested in the most general solution of these two PDEs in $\rho_i$. Any positive constant $\rho_1$ clearly satisfies the first equation, and we may even set it to be simply 1. We now show that the second equation has solutions $\rho_2(y)$ that only depend on $y$. Indeed, for such functions the equation becomes $\rho_2' = -2 R\rho_2$, which has (among other) the solutions $\rho_2(y) = A \exp(-2\int^y_1 R(\bar y) d{\bar y})$. Also the integration constant $A = \rho_2(1)$ can be chosen to be positive. With such functions $\rho_1=1$ and $\rho_2(y)$ the $D^v$-condition of the Helmholtz conditions is automatically satisfied. Clearly, $g= \theta^1 \otimes \theta^1 + \rho_2 \theta^2 \otimes \theta^2$ is then a positive-definite metric.
\end{proof}

In Proposition \ref{posg} we may replace positive-definiteness by negative-definiteness: In the proof we may choose $\rho_1$ and $A$ to be both negative.

\section{Lyapunov stability} \label{stabres}

In this section we  assume again that a mechanical system of the type (\ref{controlsode}) is given, with an arbitrary quadratic feedback control $u=M(y)\dot{y}^2+N(y)$. The relevant equations are then  of type (\ref{SODEq}).  For such systems $x$ is clearly a cyclic variable, and it generates a symmetry for the system. We may therefore reduce the two second-order differential equations in $(x,y)$ by that symmetry to a system of three first-order equations in $(y,v_y,v_x)$,  by cancelling out the variable $x$:
\begin{equation}\label{reducedsode}
{\dot y} = v_y, \quad \dot{v}_y = R(y)v_y^2 +S(y), \quad {\dot v}_x = T(y)v_y^2 +U(y).
\end{equation}
If we assume that $U(0) =S(0)=0$, the reduced system has an equilibrium at $(y=0, v_{x}=0, v_{y}=0)$ (or, equivalently, the original system (\ref{SODEq}) has a relative equilibrium $(x,y=0, \dot{x}=0,\dot{y}=0)$).

We wish to find sufficient conditions for that equilibrium to be stable. Note first that the Jacobian of the system (\ref{reducedsode}) in the equilibrium has a zero eigenvalue, and that, as a consequence, the equilibrium  can  never be linearly stable. Second, for systems of second-order differential equations, one may also consider a second, more geometric, linearization process, where the linearized equations are given by the matrix that corresponds to the Jacobi endomorphism $\Phi$ (see e.g.\ \cite{Sabau}). Since, for systems of the type (\ref{SODEq}), $\Phi$ has always eigenvalue zero, we can also not conclude that the equilibrium is Jacobi stable.

We are therefore left with trying to find a Lyapunov function for the system (\ref{SODEq}). For that reason, we now assume that we were able to find a feedback control $u=M(y)\dot{y}^2+N(y)$ for which the {\sc sode} (\ref{SODEq}) is variational, and for which $\Phi^2_2$ is not zero. Our method will rely on the use of the energy function of the variational system as a Lyapunov function (see e.g.\ \cite{Wiggins} for the definition of a Lyapunov function).

The multiplier we had found in the proof of Proposition~\ref{posg} may also be written in a coordinate basis as
\begin{eqnarray}
g&=&    dx \otimes dx - \nu (dx\otimes dy +dy\otimes dx) + \left(\nu^2 + \rho_2 \right) dy\otimes dy\nonumber\\
& =& g_{11}dx \otimes dx + g_{12} (dx\otimes dy +dy\otimes dx) + g_{22}dy\otimes dy.\label{gcoord}
\end{eqnarray}

Recall that the relation between possible Lagrangians and multipliers is such that the multiplier is the Hessian of the Lagrangian with respect to the velocities.  From this it follows that the Lagrangian that corresponds with the multiplier (\ref{gcoord}) must be of the type
\[
L =g_{11}{\dot x}^2 + 2g_{12}(y){\dot x}{\dot y} + g_{22}(y){\dot y}^2 +A_1(x,y){\dot x} + A_2(x,y){\dot y} - V(x,y).
\]
The Euler-Lagrange equations of $L$ provide the further conditions on the functions $A_i$ and $V$. We obtain
\begin{eqnarray*}
\fpd{V}{x} = - g_{11}U - g_{12}S,  && \fpd{A_1}{y} -\fpd{A_2}{x}=0, \\
\fpd{V}{y} = - g_{12}U - g_{22}S.  &&
\end{eqnarray*}
The equation which involves $A_i$ simply says that we may take any total time derivative $(\partial f/\partial q^i){\dot q}^i$ for the linear part $A_i{\dot q}^i$, for example simply $f=0$. The validity of the Helmholtz conditions (\ref{Helmholtz}) with the multiplier $g_{ij}$ ensures that a function $V(x,y)$ exists for the equations in the first column. This is clear form Proposition~\ref{pvar}, which shows that the integrability condition of this system of PDEs in $V$, namely $(g_{11}U + g_{12}S)'= (U-\nu S)'=0$, is guaranteed by the variationality of the system.

If we  assume as before that the system is such that  $S(0)=0$ and $U(0)=0$, then
\[
g_{11}U + g_{12}S = g_{11}(0)U(0) + g_{12}(0)S(0) =0.
\]
The  potentials which further satisfy $V(x,0)=0$ are then
\[
V(x,y) = - \int^y_0 (g_{12}U +g_{22}S)d\bar y =  \int^y_0(\nu U -\nu^2 S -  \rho_2 S)d\bar y = - \int^y_0 \rho_2 S d\bar y.
\]
We will denote this potential simply by $V(y)$. At $y=0$, it has the properties that
\[
\fpd{V}{y}(0)=0, \qquad \frac{\partial^2 V}{\partial y^2} (0) =- \rho'_2(0) S(0) - \rho_2(0) S'(0)=- \rho_2(0) S'(0).
\]
From the above we may conclude that, if we assume that $S'(0)<0$, then $y=0$ is a local minimum for $V$.

\begin{prop} \label{stability}
Suppose given a variational system (\ref{SODEq}) with $\Phi^2_2\not=0$, $U(0)=S(0)=0$ and $S'(0)<0$. Then  $(y=0, \dot{x}=0, \dot{y}=0)$ represents a stable relative equilibrium.
\end{prop}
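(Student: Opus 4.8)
The plan is to build a Lyapunov function out of the energy of the mechanical Lagrangian produced by Propositions~\ref{pvar} and~\ref{posg}. Under the standing hypotheses ($\Phi^2_2\neq 0$, so that we are in Case IIa1) we have at our disposal the Lagrangian
\[
L = g_{11}\dot x^2 + 2g_{12}(y)\dot x \dot y + g_{22}(y)\dot y^2 - V(y),
\]
with $(g_{ij})$ the positive-definite, $y$-dependent multiplier of (\ref{gcoord}) and $V(y) = -\int_0^y \rho_2 S\, d\bar y$ the potential constructed just above the statement. Since $L$ is time-independent, its energy
\[
E = \dot q^i \fpd{L}{\dot q^i} - L = g_{11}\dot x^2 + 2g_{12}(y)\dot x \dot y + g_{22}(y)\dot y^2 + V(y)
\]
is a first integral of the Euler--Lagrange equations, hence of the equivalent {\sc sode} (\ref{SODEq}). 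Moreover $L$ does not depend on the cyclic variable $x$, so neither does $E$; therefore $E$ descends to a well-defined, conserved function of the reduced variables $(y, v_x, v_y)$ of (\ref{reducedsode}). This is the candidate Lyapunov function.

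Next I would verify that $E$ has a strict local minimum at the equilibrium $(y, v_x, v_y) = (0,0,0)$. First, $E(0,0,0) = V(0) = 0$. To establish positive-definiteness I would compute the Hessian of $E$ at the origin. Because the kinetic term is quadratic in the velocities and $V$ depends on $y$ alone, the mixed second derivatives $\partial^2 E/\partial y\,\partial v_x$ and $\partial^2 E/\partial y\,\partial v_y$ vanish at the origin, so the Hessian is block-diagonal: the velocity block equals $2(g_{ij}(0))$, which is positive-definite by Proposition~\ref{posg}, while the remaining entry is
\[
\frac{\partial^2 V}{\partial y^2}(0) = -\rho_2(0) S'(0) > 0,
\]
using $\rho_2(0) > 0$ and the hypothesis $S'(0) < 0$ (exactly as noted before the statement). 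Hence the Hessian of $E$ at the origin is positive-definite and $E$ is positive-definite on a neighborhood of the equilibrium in the reduced phase space.

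Finally, I would conclude by the standard Lyapunov stability theorem (see e.g.\ \cite{Wiggins}): the function $E$ is continuous, vanishes at the equilibrium, is strictly positive on a punctured neighborhood of it, and is constant along the flow of (\ref{reducedsode}), so that $\dot E = 0 \le 0$. These are precisely the hypotheses guaranteeing Lyapunov stability, and therefore $(y=0, \dot x = 0, \dot y = 0)$ is a stable relative equilibrium. The only ingredient doing genuine work is the sign condition $S'(0) < 0$, which turns the critical point of $V$ at $y=0$ into a strict minimum; the rest is bookkeeping. The one point requiring a little care is checking that $E$ is positive-definite as a function on the full three-dimensional reduced space $(y, v_x, v_y)$, and not merely in the velocity directions — this is why the vanishing of the mixed Hessian terms matters, and also why $\Phi^2_2\neq 0$ is indispensable, since it is that condition which furnished the mechanical Lagrangian with its positive-definite kinetic energy in the first place.
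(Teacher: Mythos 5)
Your proposal is correct and follows essentially the same route as the paper: it uses the energy $E_L$ of the mechanical Lagrangian from Proposition~\ref{posg} as a conserved Lyapunov function on the reduced space, with positive-definiteness near the equilibrium coming from the positive-definite multiplier together with $\frac{\partial^2 V}{\partial y^2}(0)=-\rho_2(0)S'(0)>0$. Your explicit block-diagonal Hessian computation merely spells out the paper's shorter ``$g$ positive-definite and $y=0$ a minimum of $V$'' argument (the missing factor $\tfrac{1}{2}$ in your energy relative to (\ref{energy}) is an immaterial normalization inherited from the paper's own convention for $L$).
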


\begin{proof}

Consider the energy function of the Lagrangian $L$ we had found above,
\begin{equation}
E_L(y,{\dot x},{\dot y})=\frac{1}{2}(g_{11}\dot{x}^2+2g_{12}(y)\dot{x}\dot{y}+g_{22}(y)\dot{y}^2)+V(y).   \label{energy}
\end{equation}
Since the Lagrangian is autonomous, this function is always a first integral of the system. It can now be used as a Lyapunov function. Indeed, since $V(0)=0$, we have $E_L(0,0,0)=0$. Since $y=0$ is always a stationary point for $V$, so will also be $(0,0,0)$ for $E_L$. Moreover since $g$ is positive-definite, and since $y=0$ is a minimum for $V$, we know that in a neighborhood of $(0,0,0)$, $E_L(y, \dot x,\dot y)>0$. We conclude therefore that $(y=0,{\dot x}=0,{\dot y}=0)$ is Lyapunov stable in the reduced space.
\end{proof}

Remark that, although the reasoning in the proof relies on the fact that we have chosen the multiplier matrix $(g_{ij})$ to be positive-definite, the condition $S'(0)<0$ does not. If we had chosen to work with a negative-definite multiplier, then $\rho_2$ would be negative, and with $S'(0)<0$ we would get that $(\partial^2 V/\partial y^2) (0)<0$, but then $E_L(y, \dot x,\dot y)<0$ in a neighborhood of $(0,0,0)$, which gives the same result.

\section{Examples} \label{secex}
\subsection{The inverted pendulum on a cart}

{\bf Definition of the system}. The system consists of a pendulum of length $l$ and a bob mass $m$. The pendulum is attached to the top of a cart of mass $M$. The configuration manifold of the system is $Q=S^1\times \mathbb{R}$ with coordinates $(x=s,y=\phi)$. The upright position of the pendulum corresponds with $\phi=0$. The Lagrangian is given by kinetic minus potential energy, that is,
$$
{\mathcal L}(s,\phi,\dot{s},\dot{\phi})=\frac{1}{2}(\gamma\dot{s}^2+2\beta\cos(\phi)\dot{s}\dot{\phi}+\alpha\dot{\phi}^2)+\delta\cos(\phi),
$$
where $\alpha=ml^2$, $\beta=ml$, $\gamma=M+m$ and $\delta=-mgl$ are constants related to the dimensions of the system, and $g$ denotes the standard acceleration due to gravity.

The control subbundle is $\mbox{span}\left\{ds\right\}$ and we have here
$$
a^{-1}(u ds)=u\left( -\frac{\beta\cos(\phi)}{\alpha\gamma-\beta^2\cos^2(\phi)}\frac{\partial}{\partial \phi}+\frac{\alpha}{\alpha\gamma-\beta^2\cos^2(\phi)}\frac{\partial}{\partial s} \right) \, ,
$$
where $a$ has components $a_{11}=\gamma, a_{12}=a_{21}=\beta\cos(\phi)$ and $a_{22}=\alpha$.
If we consider controls of the type $u(\phi,\dot\phi)=M(\phi){\dot\phi}^2+N(\phi)$, the controlled Euler-Lagrange equations (\ref{controlsode}), written in normal form, are
\begin{eqnarray*}
\ddot{s}&=&\displaystyle \frac{\beta \delta\sin(\phi)\cos(\phi)+\alpha\beta\sin(\phi)\dot{\phi}^2+\alpha u}{\alpha\gamma-\beta^2\cos^2(\phi)} \, , \\
\ddot{\phi}&=&\displaystyle \frac{-\gamma \delta\sin(\phi)-\beta^2\sin(\phi)\cos(\phi)\dot{\phi}^2-\beta\cos(\phi)u}{\alpha\gamma-\beta^2\cos^2(\phi)} \, .  \label{SODEc}
\end{eqnarray*}

\begin{center}
\begin{tikzpicture}
\draw (0,0) -- (8,0);
\draw (1,-1) -- (1,4);
\draw [->] (1,-0.5) -- (4,-0.5);
\node at (2.5,-0.7) {$s$};
\draw [dashed] (4,-1) -- (4,4);
\draw (2.7,0.3) rectangle (5.3,1.6);
\draw [thick] (4,1.6) -- (5,3.7);
\draw [fill] (5,3.7) circle [radius=0.2];
\draw [fill] (3.35,0.3) circle [radius=0.3];
\draw [fill] (4.65,0.3) circle [radius=0.3];
\draw [<-,domain=70:87] plot ({4+1.3*cos(\x)}, {1.6+1.3*sin(\x)});
\node at (4.3,3.3) {$\phi$};
\draw [->] (1.5,0.95) -- (2.5,0.95);
\node at (2,1.1) {$u$};
\draw [<->, dotted] (4.3,1.45714) -- (5.3,3.55714);
\node at (5.1,2.5) {$l$};
\end{tikzpicture}
\end{center}

{\bf A new stabilizing control.} We will give a new class of feedback controls which turn the upright position of the pendulum into a stable equilibrium, modulo the translational symmetry. For this purpose we look for solutions of the equation (\ref{rank1}). We will require that $\Phi^2_2\neq 0$, which means that we aim for a controlled {\sc sode} that lies in Case IIa1.

If we take $N(\phi)=d \cos(\phi)\sin(\phi)$, where $d$ is a constant, one may verify that the pair $(L,M)$ with
$$
M(\phi)=-\frac{d\left( 2\beta^2\delta-2\alpha\gamma \delta+\alpha\beta d+\beta(2\beta \delta+\alpha d)\cos(2\phi) \right)\sin(\phi)}{\delta\left( 2\gamma \delta+\beta d+\beta d \cos(2\phi) \right)} \,
$$
solves the PDE (\ref{rank1}). The controlled {\sc sode} is then given by
\begin{eqnarray*}
\ddot{s}&=&\left( \frac{\alpha\beta}{\alpha \gamma-\beta^2 \cos^2(\phi)}
-\frac{\alpha d \left(2 \beta^2 \delta-2 \alpha \gamma \delta+\alpha \beta d+\beta (2 \beta \delta+\alpha d) \cos(2\phi)\right)}{\delta (2 \gamma \delta+\beta d+\beta d \cos(2\phi))(\alpha \gamma-\beta^2 \cos^2(\phi))}\right)
\sin(\phi) \dot{\phi}^2 \\  && \hspace*{1cm}+  \frac{(\beta \delta+\alpha d) \cos(\phi) \sin(\phi)}{\alpha \gamma-\beta^2 \cos^2(\phi)}, \\  &=& T(\phi){\dot \phi}^2  + U(\phi) , \\
\ddot{\phi}&=& \left(\frac{\beta (\beta \delta+\alpha d)  (-2 \gamma \delta+\beta d+\beta
d \cos(2\phi)) \cos(\phi) \sin(\phi)}{\delta \left(\alpha \gamma-\beta^2 \cos(\phi)^2\right) (2 \gamma \delta+\beta
d+\beta d \cos(2\phi))}\right) \dot{\phi}^2  \\ && \hspace*{1cm}  -\frac{(2 \gamma \delta+\beta d+\beta d \cos(2\phi)) \sin(\phi)}{2 \left(\alpha \gamma-\beta^2
\cos^2(\phi)\right)} \\ &=&  R(\phi){\dot \phi}^2  + S(\phi) .
\end{eqnarray*}
We clearly have $U(0)=S(0)=0$. For the denominator in the expression for $\ddot\phi$, we have that
$$
\alpha\gamma-\beta^2\cos^2(\phi)=m^2l^2(1-\cos^2(\phi))+mMl^2>0 \mbox{ \,for all } \phi \, .
$$
If we fix some $\phi_{max}\in (-\pi/2,\pi/2)$, we may  choose $d$ in such a way that $d>\frac{2(M+m)g}{1+\cos(2 \phi_{max})}$. If so, we get that $2 \gamma\delta+\beta d+\beta d \cos(2\phi)>0$ in the range $(-\phi_{max},\phi_{max})$.

The  components of the Jacobi endomorphism for this {\sc sode} are given by
\begin{eqnarray*}
\Phi^2_2&=&\frac{\cos(\phi)\left(2 \gamma \delta+\beta d+\beta d \cos(2 \phi)\right) \left( -2\beta^2 \delta+2\alpha\gamma \delta-\alpha\beta d+\alpha\beta d\cos(2\phi) \right)}{ 2\delta \left(\alpha\gamma-\beta^2\cos^2(\phi) \right)^2 } \, ,\\
\Phi^1_2&=&-\frac{(\beta \delta+\alpha d)\cos^2(\phi)\left( -2\beta^2 \delta+2\alpha\gamma \delta-\alpha\beta d+\alpha\beta d\cos(2\phi) \right)}{\delta \left(\alpha\gamma-\beta^2\cos^2(\phi) \right)^2} \, .
\end{eqnarray*}

Since we also have $\alpha\beta d(\cos(2\phi)-1)-2\delta(\beta^2-\alpha\gamma)<0$ we get $\Phi^2_2\not =0$ for all $\phi\in (-\pi/2,\pi/2)$. This means that the {\sc sode} belongs to Case IIa1. By Proposition \ref{posg} we can find a positive-definite matrix of multipliers. Since $S'(0)=-\frac{\gamma \delta+\beta d}{\alpha\gamma-\beta^2}$ and $\alpha\gamma-\beta^2 = mMl^2>0$, we know from Proposition \ref{stability} that  the equilibrium $\phi=0, \dot{\phi}=0, \dot{s}=0$ will be stable in the reduced space when $d>\frac{-\gamma \delta}{\beta}=(m+M)g$.

If we fix the values of the parameters to be $M=2$, $m=1$, $l=1$ the control discussed above will stabilize the upright position of the pendulum for $d>3g$. If we also choose $\phi_{max}=\frac{\pi}{4}$ then we need to require $d>6g$ for the {\sc sode} to be defined. For these parameters and with $d=7g$ the {\sc sode} is
\begin{eqnarray*}
\ddot{s}&=&-\frac{6 \sin(\phi) \left(g \cos(\phi) (1+7 \cos(2 \phi))+(13+7 \cos(2 \phi)) \dot{\phi}^2\right)}{\left(-3+\cos(\phi)^2\right)
(1+7 \cos(2 \phi))} \, , \\
\ddot{\phi}&=&\frac{\sin(\phi) \left(g (1+7 \cos(2 \phi))^2+6 (33 \cos(\phi)+7 \cos(3 \phi)) \dot{\phi}^2\right)}{(-5+\cos(2 \phi))
(1+7 \cos(2 \phi))} \, .
\end{eqnarray*}

Below is a simulation of this example with {\sc matlab}, with initial conditions $\phi(0)=0.4,\, \dot{\phi}(0)=0.1,\, s(0)=0,\, \dot{s}(0)=-1.5$, and $g=9.81$. The position $s$ of the cart is not stabilized, since it represents a cyclic variable.

\includegraphics[height=9cm]{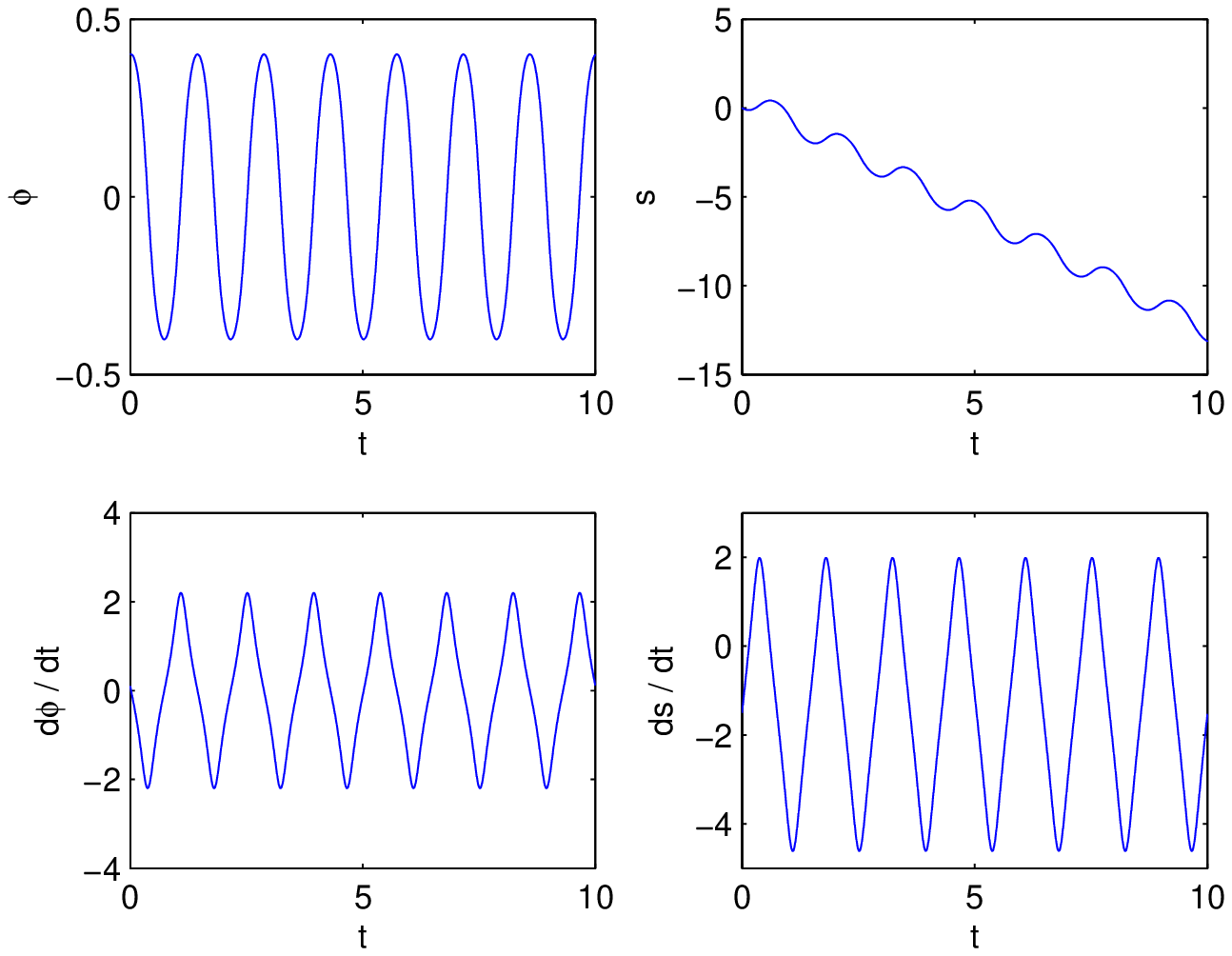}

{\bf The controls of \cite{BLMfirst}.} In this paragraph we recover the control given in \cite{BLMfirst} for the inverted pendulum on a cart and we give additional multipliers for the Lagrangian.
We consider again a control of the type $u(\phi,\dot{\phi})=M(\phi)\dot{\phi}^2+N(\phi)$, but now we take
\[
N(\phi)=\frac{\kappa\beta \delta \cos(\phi)\sin(\phi)}{\alpha-\frac{\beta^2}{\gamma}(1+\kappa)\cos^2(\phi)},
\] where $\kappa$ is a constant. With this $N(\phi)$,
$$
M(\phi)=\frac{\kappa\beta\alpha\sin(\phi)}{\alpha-\frac{\beta^2}{\gamma}(1+\kappa)\cos^2(\phi)} ,
$$
is a  solution of the PDE (\ref{rank1}). This control coincides with the one given in \cite{BLMfirst}. The controlled {\sc sode} is then
\begin{eqnarray*}
\ddot\phi &=&\frac{\gamma \delta\sin(\phi)}{-\alpha\gamma+\beta^2(1+\kappa)\cos^2(\phi)}+\frac{\beta^2(1+\kappa)\cos(\phi)\sin(\phi)}{-\alpha\gamma+\beta^2(1+\kappa)\cos^2(\phi)}\dot{\phi}^2 \, ,\\
\ddot s &=&-\frac{\beta \delta(1+\kappa)\cos(\phi)\sin(\phi)}{-\alpha\gamma+\beta^2(1+\kappa)\cos^2(\phi)}-\frac{\alpha\beta(1+\kappa)\sin(\phi)}{-\alpha\gamma+\beta^2(1+\kappa)\cos^2(\phi)}\dot{\phi}^2 \, .
\end{eqnarray*}

The value of $S'(0) = \gamma\delta/(\beta^2(1+\kappa)-\alpha\gamma)$ will be negative  when $\kappa>(\alpha\gamma-\beta^2)/\beta^2 =M/m$. For such values of $\kappa$ we will get a stable equilibrium.

Our criterion for stability does not involve the multipliers, nor the potential energy. If we compute the multipliers, we may compare them with the Hessian of the Lagrangian given in \cite{BLMfirst}. In the current setting
\[
\nu = -\frac{\beta (1+\kappa) \cos(\phi)}{\gamma} \, .
\]
The equation for $\rho_2$ was $\rho_2' = -2 R\rho_2$. One may easily verify that
\[
\rho_2 = A (\beta^2(\kappa+1) \cos^2(\phi)-\alpha\gamma )
\]
(with constant $A$) is a solution of it which is, however, not always positive. Since we are only interested in proving stability in a small region around the equilibrium, we may restrict our analysis to $\phi \in (-\phi_{max}, \phi_{max})$, with
\[
\sin^2(\phi_{max}) = \frac{\kappa - \frac{M}{m}}{1+\kappa}.
\]
(Under the current assumption on $\kappa$ the constant on the right hand is indeed positive.) The above region for $\phi$ coincides with the one that is also adopted in \cite{BLMfirst}. In this region, the denominator of the control never vanishes and, for every positive choice of $A$, the function $\rho_2$ remains positive and the multiplier matrix positive-definite. The corresponding multiplier is in fact
\begin{equation} \label{ourmult}
g_{11}=1, \qquad g_{12}= \frac{\beta (1+\kappa) \cos(\phi)}{\gamma}, \qquad g_{22}=\left( \frac{\beta^2(1+\kappa)^2}{\gamma^2}+2A\beta^2(1+\kappa) \right) \cos^2(\phi) -2A\alpha\gamma,
\end{equation}
from which one may derive, up to a constant,  the Lagrangian
$$
L=\frac{1}{2}\left( \dot{s}^2 +2g_{12}\dot{s}\dot{\phi} + g_{22}\dot{\phi}^2 \right)-2 A \gamma \delta\cos(\phi).
$$

The Lagrangian that has been proposed in \cite{BLMfirst} is
$$
L=\frac{1}{2}\left( \alpha\dot{\phi}^2 +2\beta\cos(\phi)\dot{s}\dot{\phi}+2\beta\cos(\phi)K\dot{\phi}^2 +\gamma\dot{s}^2 +2\gamma K \dot{s}\dot{\phi}+\gamma K^2 \dot{\phi}^2 \right)+\frac{\sigma}{2}\gamma K^2\dot{\phi}^2+\delta\cos(\phi)
$$
with $K=\kappa\frac{\beta}{\gamma}\cos(\phi)$, $\sigma=-1/{\kappa}$ and $\kappa$ a constant (satisfying $\kappa>\frac{\alpha\gamma-\beta^2}{\beta^2}$). Its multipliers are
\begin{equation}\label{blochmult}
g_{11}=\gamma, \qquad g_{12}=\beta(1+\kappa)\cos(\phi), \qquad
g_{22}=\frac{\beta^2\kappa}{\gamma}(1+\kappa)\cos^2(\phi)+\alpha.
\end{equation}
For better comparison, we may rescale both this multiplier and its Lagrangian with a constant factor $1/\gamma$ (to get also $g_{11}=1$). The multiplier matrix (\ref{blochmult}) then agrees with (\ref{ourmult}), if we set the integration constant $A$ to be $-1/(2\gamma^2)$. The negative choice for $A$ is not in disagreement with what we said before, since the multiplier matrix (\ref{blochmult}) of \cite{BLMfirst} is, surprisingly, non-definite in the region $(-\phi_{max}, \phi_{max})$.

\subsection{The inertia wheel pendulum}

{\bf Definition of the system.} The system consists of an inverted pendulum with an actuated wheel at the end. The configuration space is $S^1 \times S^1$. We will denote the coordinates of the system by $(x=\varphi,y=\theta)$, where $\varphi$ and $\theta$ are the angle of the wheel and the pendulum, respectively (see the figure below). The upright position of the pendulum corresponds to $\theta=0$. The Lagrangian is given by
$$
{\mathcal L}=\frac{1}{2}(b\dot{\varphi}^2+2b\dot{\theta}\dot{\varphi}+a\dot{\theta}^2)-m(1+\cos(\theta)) \, ,
$$
where $m$, $a$ and $b$ are positive constants with $a>b$.
These constants are defined from the physical parameters of the system as
$$
a=m_1 l_1^2+m_2l_2^2+I_1+I_2 \, , \quad b=I_2 \, , \quad \mbox{and} \quad m=m_1 l_1+m_2 l_2 \, ,
$$
where $m_1, I_1, m_2, I_2$ denote respectively the masses and moments of inertia of the pendulum and the wheel, and $l_1, l_2$ denote, respectively, the distances from the origin to the center of mass of the pendulum and the wheel, as shown in the picture below. See \cite{Spong}, for more details.

The controlled Euler-Lagrange equations are $a\ddot{\theta}+b\ddot{\varphi}=m\sin(\theta)$ and $b\ddot{\theta}+b\ddot{\varphi}=u$, which in normal form become
\begin{equation}
\ddot{\theta}= \frac{bm\sin(\theta)-bu}{b(a-b)}, \quad \ddot{\varphi}= \frac{-bm\sin(\theta)+au}{b(a-b)} \, .  \label{SODEiwp}
\end{equation}

\begin{center}
\begin{tikzpicture}
\draw (-1,0) -- (5,0);
\draw [dashed] (2,0) -- (2,4);
\draw [thick] (2,0) -- (4,3);
\draw [dashed] (4,3) -- (4.5,3.75);
\draw (4,3) circle [radius=0.5];
\draw [fill] (4,3) circle [radius=0.1];
\draw [fill] (3,1.5) circle [radius=0.1];
\draw [<-,domain=63:87] plot ({2+2*cos(\x)}, {0+2*sin(\x)});
\node at (2.5,2.3) {$\theta$};
\draw [<-,domain=0:57] plot ({4+0.7*cos(\x)}, {3+0.7*sin(\x)});
\node at (5,3.5) {$\varphi$};
\draw [<->,dotted] (2.2,-0.1334) -- (3.2,1.3666) ;
\node at (3,0.55) {$l_1$};
\draw [<->,dotted] (2.7,-0.4669) -- (4.7,2.5331) ;
\node at (4.2,1.1) {$l_2$};
\draw [->,dotted] (1,1.5) -- (3,1.5) ;
\draw [->,dotted] (1,3) -- (4,3) ;
\node at (0.4,1.5) {$m_1, I_1$};
\node at (0.4,3) {$m_2,I_2$};
\end{tikzpicture}
\end{center}

{\bf Stabilizing control.} In view of the lack of quadratic terms in ${\dot \theta}$ in the above equations, we try to find a control  $u=N(\theta)$ (i.e.\ with $M(\theta)=0$) such that {\sc sode} (\ref{SODEiwp}) lies in Case IIa1. Equation (\ref{rank1}) is then
$$
\frac{4m\dot{\theta}(\sin(\theta)N'+\cos(\theta)N'')}{(a-b)b}=0.
$$
It admits a solution $N(\theta)=d_2+d_1 \sin(\theta)$, where $d_1$ and $d_2$ are integration constants. Since we want the state $(\theta=0,\dot{\theta}=0,\dot{\varphi}=0)$ to be an equilibrium we must take $N(\theta)=d_1 \sin(\theta)$. In that case
\begin{equation}
\Phi^2_2=\frac{2(d_1-m)\cos(\theta)}{a-b} \quad \mbox{ and } \quad  \Phi^1_2=\frac{2(ad_1-bm)\cos(\theta)}{b(b-a)} \, ,
\end{equation}
so we will have $\Phi^2_2\not=0$ around the equilibrium as long as we require $d_1\not=m$.

The controlled {\sc sode} (in Case IIa1) is then given by
\begin{eqnarray*}
\ddot\theta &=&\frac{(m-d_1)\sin(\theta)}{a-b}=S(\theta) \, , \\
\ddot\varphi&=&\frac{(-ad_1+bm)\sin(\theta)}{b(b-a)}=U(\theta) \, .
\end{eqnarray*}
By Proposition \ref{stability} it is enough to choose $d_1>m$ to get stability for the equilibrium $\theta=0$, $\dot{\theta}=0$, $\dot{\varphi}=0$.

We choose the parameters of the system to be $a=0.4846$, $b=0.0032$ and $m=37.98$ (as in a simulation of \cite{GMN2011}). If we set the constant in the control to be $d_1=60$ and take the initial conditions to be $\theta_0=0.0001$, $\varphi_0=0.1$, $\dot{\theta}_0=0.0001$ and $\dot{\varphi}_0=0.1$,  we get the {\sc matlab} simulation below.

\includegraphics[height=9cm]{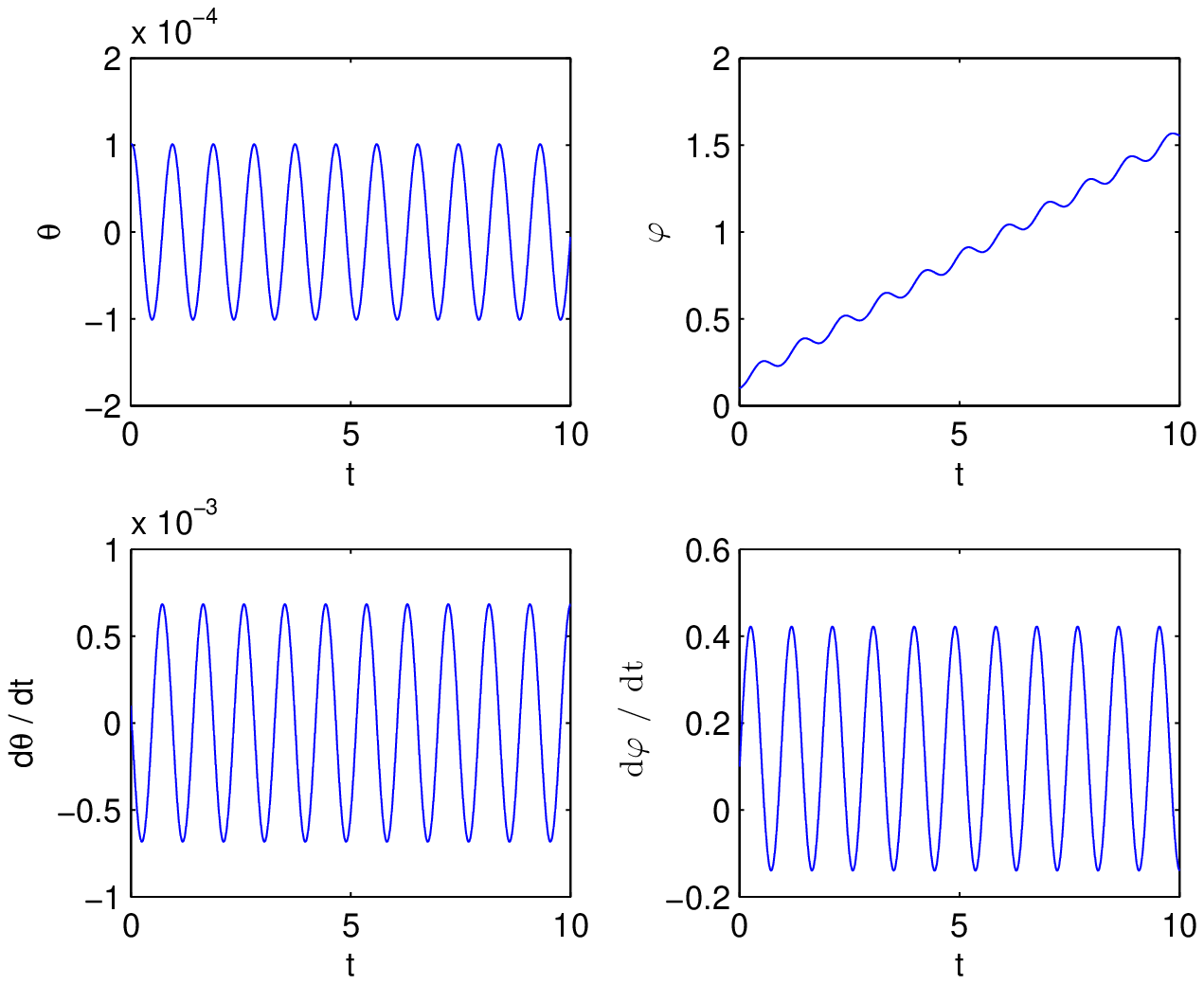}

In the Introduction we mentioned the feedback control (\ref{cbloch}) of \cite{BLMfirst}. Although the example is not explicitly treated in \cite{BLMfirst},  one may still calculate the corresponding control. Since the multipliers of the given Lagrangian are constant, it reduces to
$$
u=-\frac{1}{\sigma}\frac{a_{12}}{A_{22}}\frac{\partial {\mathcal V}}{\partial \theta}=\frac{-bm}{\sigma (a-b)+b}\sin(\theta).
$$
This coincides with our control $u=d_1 \sin(\theta)$ if we take $\sigma=\frac{b(m+d_1)}{d_1(b-a)}$. Our stability condition $d_1>m$ is then equivalent with $\frac{2b}{b-a} < \sigma < \frac{b}{b-a}$.

\section{Asymptotic stability} \label{secas}

In Section~\ref{stabres} we only gave a criterion for stability of Lyapunov type. Along the lines of e.g.\ \cite{BLMfirst}, we now  modify the control that gives Lyapunov stability in such a way that the system becomes dissipative, and the equilibrium asymptotically stable.

We use, as before, the notation $q^i$ for the variables $(x,y)$. Assume that we had found a control $u(y, \dot{y})=M(y)\dot{y}^2+N(y)$ for which  the system (\ref{controlsode}) is  variational, that is, assume that we know of multipliers ${g}_{ij}$ and a regular Lagrangian ${L}$ such that
\begin{equation}
{g}_{ij}\left( \ddot{q}^j-f^j \right)=\frac{d}{dt}\left( \frac{\partial {L}}{\partial \dot{q}^i} \right)-\frac{\partial {L}}{\partial q^i} \, . \label{var1}
\end{equation}
Now, we further add control forces to this system with the goal of modifying it into a set of Euler-Lagrange equations with external dissipative forces. More precisely, we put $u=M(y)\dot{y}^2+N(y)+u_2$ in (\ref{controlsode}). Then, in normal form, we are considering systems of the type
\begin{equation}\label{modsode}
\ddot{q}^j=f^j+a^{-1}(u_2dx)^j,
\end{equation}
where $a$ is the metric of the original Lagrangian and the second control $u_2$ is chosen in such a way that
\begin{equation} \label{diss1}
g_{ij}\left( \ddot{q}^j-f^j-a^{-1}(u_2dx)^j \right)=\frac{d}{dt}\left( \frac{\partial {L}}{\partial \dot{q}^i} \right)-\frac{\partial {L}}{\partial q^i}-\frac{\partial D}{\partial \dot{q}^i}.
\end{equation}
The term in $D$ is called a dissipative force if it has the effect that, along trajectories, the energy $E_L$ has the property  ${\dot E}_L <0$. In view of (\ref{var1}), condition (\ref{diss1}) will hold when
\begin{equation} \label{Dcond}
{g}_{ij}(a^{-1}u_2dx)^j = u_2 (g_{i1}a^{11}+g_{i2}a^{12}) =\frac{\partial D}{\partial \dot{q}^i}\, .
\end{equation}

We may think of the above as a PDE in $D$. If we introduce the simplified notation
$$
\Box={g}_{11}a^{11}+{g}_{12}a^{12}= a^{11} -\nu a^{12}\, , \quad \diamond= g_{12} a^{11}+g_{22}a^{12} = -\nu a^{11}+(\nu^2+\rho_2)a^{12} \, ,
$$
the integrability condition is $\frac{\partial u_2}{\partial \dot{y}}\Box= \frac{\partial u_2}{\partial \dot{x}}\diamond$  (mixed derivatives of $D$ coincide). The functions $u_2$ that satisfy this condition are of the type
$$
u_2=f(x,y)\left(\Box\dot{x} +\diamond\dot{y}\right)+g(x,y).
$$
With this,
$$
D=f(x,y)\left(\frac{\Box^2}{2}\dot{x}^2+\Box\diamond\dot{x}\dot{y}+\frac{\diamond^2}{2}\dot{y}^2\right)+g(x,y)\left(\Box\dot{x}+\diamond\dot{y} \right)+h(x,y)
$$
satisfies the condition (\ref{Dcond}).

We had already established in Section~\ref{stabres} that, when $\Phi^2_2\neq 0$, $S(0)=0,U(0)=0$ and $S'(0)<0$, there exists a positive-definite multiplier and a potential $V$ such that, in a neighborhood around $(x,y=0,\dot x=0,\dot y= 0)$, $E_L>0$. It is easy to see that, along trajectories of the system, $\dot{E}_L={\dot q}^i (\partial D/\partial {\dot q}^i)$. If we choose $g=h=0$  and $f$ to be a strictly negative function, then
\[
D=\frac{f}{2}\left( \Box\dot{x}+\diamond\dot{y} \right)^2, \qquad  u_2=f\left(\Box\dot{x} +\diamond\dot{y}\right)
\]
and $\dot{E}_L \leq 0$. It is also clear that the equilibrium does not change under the extra control law, since $u_2(x,y=0,{\dot x}=0,{\dot y}=0)=0$.
From LaSalle's invariance principle (see e.g. \cite{Wiggins}) it follows that if the only trajectory of (\ref{modsode}) contained in the set
$$
M=\left\{(x,y,\dot{x},\dot{y}) : \dot{E_L}=0 \right\}=\left\{(x,y,\dot{x},\dot{y}) : D=0 \right\}=\left\{(x,y,\dot{x},{\dot y}) : \Box\dot{x}+\diamond\dot{y} =0 \right\}
$$
is $(x,0,0,0)$, then the relative equilibrium is asymptotically stable.

\begin{prop} Assume that the system (\ref{SODEq}) is variational  with $\Phi^2_2\not=0$, and that $S(0) =U(0)=0$ and $S'(0)<0$. If there exists no solution $(x(t),y(t))$ of (\ref{modsode}), other than the equilibrium $(x,0)$, that satisfies
\begin{equation} \label{ascond}
(\Box T + \diamond R) {\dot y}^2 + \dot\Box {\dot x} + \dot\diamond \dot y+   \Box U + \diamond S = 0,
\end{equation}
then the relative equilibrium is asymptotically stable.
\end{prop}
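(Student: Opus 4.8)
The plan is to apply LaSalle's invariance principle directly, using the machinery that has already been assembled in the paragraphs immediately preceding the statement. By the hypotheses ($\Phi^2_2\neq 0$, $S(0)=U(0)=0$, $S'(0)<0$) together with Propositions~\ref{posg} and~\ref{stability}, we already know that there is a positive-definite multiplier $(g_{ij})$, a Lagrangian $L$, a potential $V$ with a local minimum at $y=0$, and an energy function $E_L$ which is positive-definite in a neighborhood of the equilibrium $(x,y=0,\dot x=0,\dot y=0)$. The choice $g=h=0$ and $f<0$ for the dissipative potential $D$ gives $u_2 = f(\Box\dot x + \diamond\dot y)$, and we have $\dot E_L = \dot q^i(\partial D/\partial \dot q^i) = f(\Box\dot x+\diamond\dot y)^2 \le 0$ along trajectories of the modified system~(\ref{modsode}). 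So $E_L$ is a genuine Lyapunov function and the equilibrium is at least stable; the task is only to promote this to asymptotic stability.

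First I would invoke LaSalle's invariance principle: the $\omega$-limit set of any trajectory starting near the equilibrium is contained in the largest invariant subset of the zero set $M = \{\dot E_L = 0\} = \{\Box\dot x + \diamond\dot y = 0\}$. It therefore suffices to show that the only entire trajectory of~(\ref{modsode}) lying inside $M$ is the equilibrium itself. The key computational step is to identify precisely what ``being a trajectory confined to $M$'' means. A trajectory lies in $M$ exactly when $\Box\dot x + \diamond\dot y \equiv 0$ holds identically in time; differentiating this constraint once along the flow and substituting the second-order equations~(\ref{SODEq}) (i.e.\ $\ddot x = T\dot y^2 + U$, $\ddot y = R\dot y^2 + S$) together with $\dot\Box = \Box' \dot y$ and $\dot\diamond = \diamond'\dot y$ should reproduce exactly the left-hand side of~(\ref{ascond}). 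Thus confinement to $M$ forces both $\Box\dot x + \diamond\dot y = 0$ and its derivative~(\ref{ascond}) to vanish along the trajectory; conversely any such confined trajectory solves~(\ref{ascond}). Hence the hypothesis that~(\ref{ascond}) has no solution other than $(x,0)$ is precisely the statement that $M$ contains no nontrivial invariant trajectory.

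I would then conclude by assembling these pieces: since $E_L$ is positive-definite and $\dot E_L\le 0$ on a neighborhood of the equilibrium, LaSalle guarantees convergence of trajectories to the largest invariant set in $M$; by the no-nontrivial-solution hypothesis that set reduces to the relative equilibrium $(x,0,0,0)$ (the cyclic variable $x$ being free, as it must be since it is unstabilized), giving asymptotic stability in the reduced space. One subtlety worth flagging is the role of $u_2$ itself: because $u_2$ vanishes at the equilibrium, the equilibrium of~(\ref{modsode}) coincides with that of~(\ref{SODEq}), so the Lyapunov estimate carries over unchanged to the modified system; I would note this explicitly since $E_L$ is the energy of the \emph{undissipated} Lagrangian and its decrease is entirely due to the added force.

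The main obstacle is the derivation of~(\ref{ascond}) as the differentiated constraint $\frac{d}{dt}(\Box\dot x+\diamond\dot y)=0$ restricted to $M$. This is where one must be careful: when differentiating $\Box\dot x + \diamond\dot y$ one produces $\Box\ddot x + \diamond\ddot y + \dot\Box\dot x + \dot\diamond\dot y$, and only after inserting the equations of motion and simplifying does the coefficient of $\dot y^2$ collapse to $\Box T + \diamond R$ and the zeroth-order terms to $\Box U + \diamond S$. One should verify that no extra term survives and, in particular, confirm that the constraint $\Box\dot x+\diamond\dot y=0$ is used (or not used) consistently so that~(\ref{ascond}) is exactly the invariance condition on $M$ rather than a weaker or stronger relation. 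Everything else is routine application of the already-established Lyapunov and LaSalle framework.
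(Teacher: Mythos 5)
Your proposal is correct and takes essentially the same route as the paper: LaSalle's invariance principle with the energy $E_L$ as Lyapunov function, $\dot E_L = f(\Box\dot x + \diamond\dot y)^2 \le 0$, and differentiation of the constraint $\Box\dot x + \diamond\dot y \equiv 0$ along trajectories confined to $M$ to produce exactly (\ref{ascond}). The one detail the paper makes explicit that you leave implicit is that $u_2 = f(\Box\dot x + \diamond\dot y)$ vanishes identically along any trajectory confined to $M$, which is precisely what licenses your substitution of the unmodified equations (\ref{SODEq}) in place of (\ref{modsode}) when differentiating the constraint — the consistency issue you flagged at the end resolves in exactly this way.
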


\begin{proof}
Suppose there is a  solution that satisfies  $\Box\dot{x}+\diamond\dot{y}=0$. Then it has $u_2 = f\left(\Box\dot{x} +\diamond\dot{y}\right)= 0$ and thus also
\[
0 = \dot{\Box}\dot{x}+ \Box\ddot{x}+ \dot\diamond\dot{y} + \diamond\ddot y= (\Box T + \diamond R) {\dot y}^2 + \dot\Box {\dot x} + \dot\diamond \dot y+   \Box U + \diamond S.
 \]
\end{proof}
The condition (\ref{ascond}) will be useful in the example below.

{\bf Example 1: Asymptotic stabilization of the inertia wheel pendulum.} We will modify the control that we had found in Section \ref{secex}, in accordance with the considerations above.
We first  compute the multipliers of the new Lagrangian. Notice that
$$
\nu =- \frac{ad_1-bm}{bd_1-bm}.
$$
Since $R=0$, we may take the function $\rho_2$ of the multiplier to be any positive constant. The multiplier is then the constant matrix
\begin{eqnarray*}
g_{11}=1 , \qquad g_{12} = -\nu, \qquad
g_{22}= \nu^2 + \rho_2.
\end{eqnarray*}
The controlled {\sc sode}, with extra control $u_2$, is
\begin{eqnarray*}
\ddot\theta &=&\frac{(m-d_1)\sin(\theta)}{a-b}-\frac{bu_2}{ab-b^2} \, , \\
\ddot\varphi &=&\frac{(-ad_1+bm)\sin(\theta)}{b(b-a)}+\frac{au_2}{ab-b^2} \, ,
\end{eqnarray*}
where $u_2= f (\Box\dot{\varphi} +\diamond\dot{\theta} )$. Since also the matrix $(a_{ij})$ is constant, both $\Box$ and $\diamond$ are constants. With all this, condition (\ref{ascond}) takes a very simple form. If a solution $(\theta(t),\varphi(t))$, other than the equilibrium, exists in the set where $\Box\dot{\varphi} +\diamond\dot{\theta} =0$, then this solution also satisfies
\[
0=\Box U +\diamond S = \left(\Box \frac{-ad_1+bm}{b(b-a)}  +  \diamond\frac{m-d_1}{a-b} \right)\sin(\theta(t)) =\frac{(d_1-m)\rho_2}{(a-b)^2}\sin(\theta(t)).
\]
Since we had chosen $d_1\not=m$ and $\rho_2>0$, we get that the first factor never vanishes. The only possible solution with the above property is therefore given by $\sin(\theta(t))=0$, and thus $\theta(t)=0$. We may conclude that the equilibrium is asymptotically stable.

If we take $u_2=\frac{-0.1}{\nu^2}(\diamond\dot{\theta}+\Box\dot{\varphi})$, the same parameters and initial conditions as in Section~\ref{secex}, and $\rho_2=\nu^2$ then we get the following {\sc matlab} simulation:

\includegraphics[height=9cm]{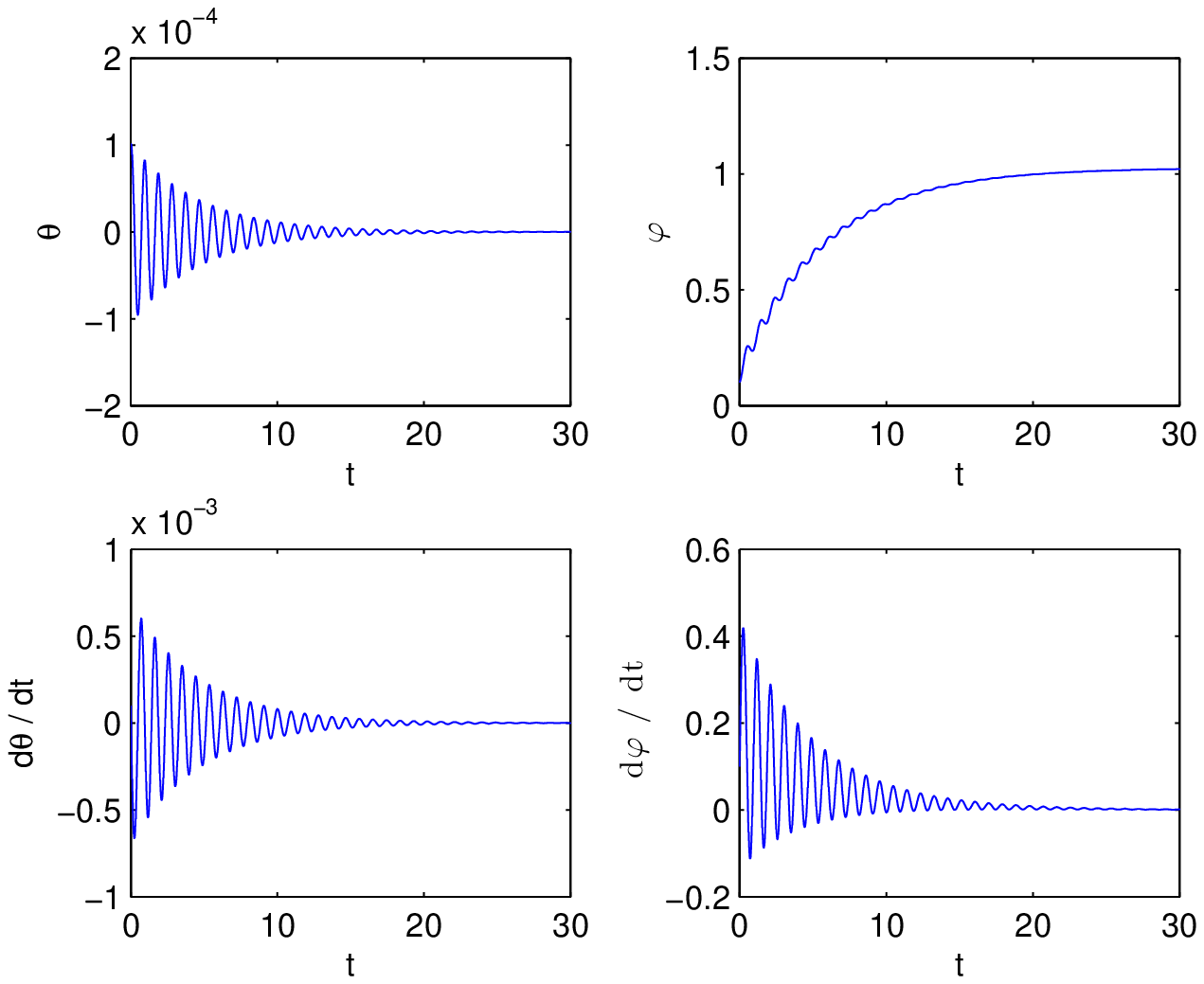}

{\bf Example 2: The inverted pendulum on a cart.} Consider again the new stabilizing control found in Section~\ref{secex}. In this case we get
$$
\nu=\frac{-2 (\beta \delta+\alpha d)\cos(\phi)}{(2 \gamma \delta+\beta d+\beta d \cos(2 \phi))} , \quad \rho_2=A\frac{(\beta^2-2\alpha\gamma+\beta^2\cos(2\phi))^{1-\frac{\alpha d}{\beta\delta}}}{(2\gamma\delta+\beta d+\beta d \cos(2\phi))^2} \, .
$$
If we choose $A=0.04$ and take the control $u_2=-0.03 s^2 (\Box \dot{s}+\diamond \dot{\phi})$, then with the same parameters and initial conditions as in Section~\ref{secex} we get the following {\sc matlab} simulation:

\includegraphics[height=9cm]{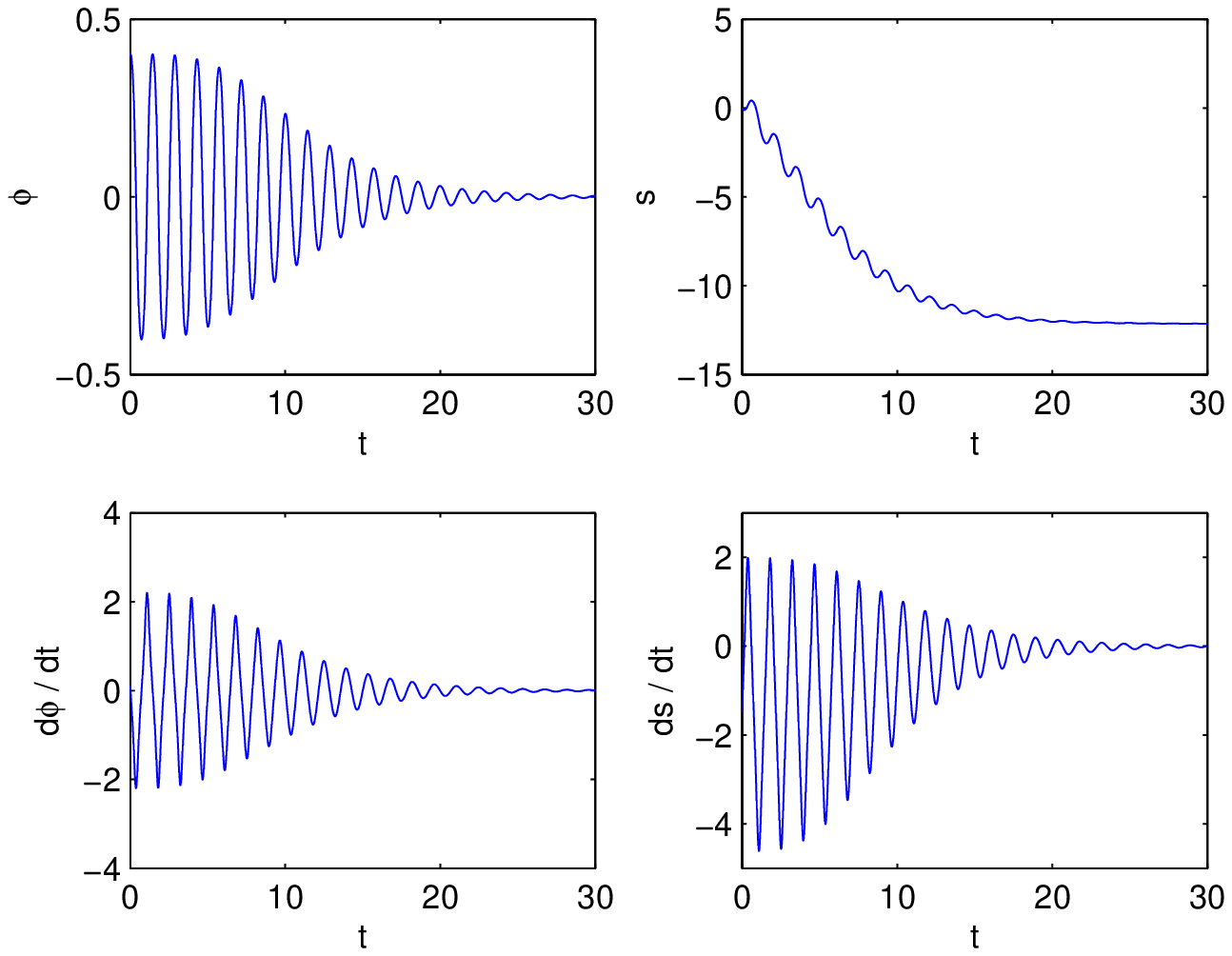}

\section{Conclusions and future directions}

In this paper we have presented a partially new technique by which one may stabilize a particular class of controlled Lagrangian systems with two degrees of freedom and one degree of actuation. The technique is based on finding a control law for which the system is pushed into one of Douglas' cases that are known to be always variational. For our class of systems we have given criteria to choose
\begin{itemize}
\item a feedback control which makes the system variational,
\item a feedback control which makes the system variational and stabilizes an unstable equilibrium,
\item a feedback control which makes the system Lagrangian with dissipative forces and asymptotically stabilizes an unstable equilibrium.
\end{itemize}

For a concrete problem, the $R$, $S$, $T$ and $U$ functions that appear in (\ref{rank1}) are known functions of $M$ and $N$. The first step in the proposed strategy is to solve equation (\ref{rank1}) for $M$ with a parametrized ansatz for $N$. The second step is to fix the parameters in such a way that $\Phi^2_2\not=0$ (in order for the controlled {\sc sode} to belong to Case IIa1) and $S'(0)<0$ (in order to obtain Lyapunov stability). Finally, if asymptotic stability is desired, then one may compute $\nu$ and $\rho_2$, and verify condition (\ref{ascond}).

We have used this strategy to find controls that stabilize the upright position of the inverted pendulum on a cart and of the inertia wheel pendulum.

As we mentioned in the Introduction, in \cite{CPST1999,94CSMBP} and follow-up papers, part of the Douglas classification is generalized to arbitrary dimensions, and it is shown that some of the subcases are always variational. One such variational case is precisely the Case IIa1 that we have exploited throughout this paper. We plan to study whether or not we can find controls that push {\sc sode}s of higher dimensional examples into one of the variational cases.

We also intend to study whether the condition of a cyclic variable can be removed. A paper in that direction is \cite{BCLM} where the method of controlled Lagrangians is extended to systems with a symmetry in the kinetic energy, but with a potential that breaks the symmetry. The price to pay is that, again, dissipative forces need to be brought into the picture. We wish to examine whether, by using the approach of this paper, one could  work without dissipative forces and find a suitable multiplier instead.

In the section on asymptotic stability we have used two steps. First we have assumed that we could add a control in such a way that the system is variational, second we have added an extra control to make the system dissipative, but with the same Lagrangian as the one of the first step. The first step is based on an analysis of the solution space of the Helmholtz conditions (\ref{Helmholtz}). However, also for dissipative systems, there exist Helmholtz-type conditions. That is to say, given a {\sc sode} (\ref{sode}) one could wonder (in any dimension $n$) under what conditions there exist a multiplier $g_{ij}$ and functions $L$ and $D$ such that
 $$
g_{ij}\left( \ddot{q}^j-f^j \right)=\frac{d}{dt}\left( \frac{\partial {L}}{\partial \dot{q}^i} \right)-\frac{\partial {L}}{\partial q^i}-\frac{\partial D}{\partial \dot{q}^i}.
$$
As it turns out, in \cite{CMS10,MSC2011}, it is shown that  necessary and sufficient conditions for this to occur can also be written entirely in terms of the multiplier $g_{ij}$, without having to make reference to the sought-for functions $L$ and $D$. One of the conditions is of algebraic type and of the form
 \[
\sum_{X,Y,Z} g(R(X, Y ), Z) = 0,
 \]
where $R$ stands for the curvature of the non-linear connection that can be associated to a {\sc sode}. A possible classification of such dissipative {\sc sode}s would be based on properties of this curvature and its derivatives (much like the Douglas classification is based on $\Phi$ and its corresponding Helmholtz condition).  In dimension $n=2$, however, the curvature condition is automatically satisfied and in \cite{Buca} it is even shown that every two-dimensional system of second-order differential equations is dissipative. It would be an interesting path to investigate whether, based on these ideas, one may find assumptions under which one may asymptotically stabilize a two-dimensional mechanical system.

{\bf Acknowledgments.}

MFP has been financially supported by MINECO (Spain) MTM 2013-42870-P, by the ICMAT Severo Ochoa project SEV-2011-0087 and by an FPU scholarship from MECD. She is grateful to the Department of Mathematics of Ghent University for its hospitality during the visit that made this work possible. Both authors thank David Mart\'{i}n de Diego for his helpful comments and careful reading of the manuscript and the referees for their constructive remarks.


\bibliographystyle{plain}

\end{document}